\DeclareMathOperator{\conv}{conv}
\DeclareMathOperator{\dist}{dist}
\DeclareMathOperator{\diam}{diam}
\DeclareMathOperator{\R}{\mathbb R}
\newtheorem{theorem}{Theorem}[section]
\newtheorem{proposition}[theorem]{Proposition}
\newtheorem{conjecture}[theorem]{Conjecture}
\newtheorem{claim}[theorem]{Claim}
\newtheorem{lemma}[theorem]{Lemma}
\newtheorem{corollary}[theorem]{Corollary}
\theoremstyle{definition}
\newtheorem{definition}[theorem]{Definition}
\newtheorem{question}[theorem]{Question}
\newcommand{\mb}{\mathbf}
\newcommand{\dotp}{\boldsymbol{\cdot}}
\def\epsilon{\varepsilon}
\title{A cornering strategy for synchronizing a DFA}
\author{Peter Bradshaw}
\address{Department of Mathematics, University of Illinois Urbana-Champaign}
\email{pb38@illinois.edu}
\thanks{We thank Tom Shermer and Jan Manuch for insightful initial discussions on this problem, particularly as they pertain to product DFAs and difference DFAs in $\mathbb{R}^2$. This project was partially funded by NSF RTG grant DMS-1937241, and by NSERC through Discovery Grant R611368.}
\author{Alexander Clow}
\address{Department of Mathematics, Simon Fraser University}
\email{alexander\_clow@sfu.ca}
\author{Ladislav Stacho}
\address{Department of Mathematics, Simon Fraser University}
\email{ladislav\_stacho@sfu.ca}
\date{\today}
\begin{document}

\begin{abstract}
    This paper considers the existence of short synchronizing words in deterministic finite automata (DFAs).
    We define two general strategies for generating synchronizing words, and we show that each of these strategies can be applied if and only if a DFA is synchronizable.
    Furthermore, we show that if a synchronizable DFA is well-structured, then our strategies generate short synchronizing words.
    The first of our strategies,
    called the \emph{cornering strategy}, 
    takes advantage of states in a DFA with properties similar to those of a polytope vertex.
    The second of our strategies, similar to the cornering strategy and called the \emph{$f$-ordered strategy},
    takes advantage of a partial order defined on the states of a DFA.
    
    We apply our cornering strategy
    to the class of \emph{difference DFAs},
    whose states form subsets of $\mathbb R^d$
    and whose input symbols correspond to translation vectors between states.
    We show that difference DFAs share many similarities with aperiodic DFAs, and in particular, a difference DFA $M$ has a synchronizing word if and only if it has a universally reachable state.
    Using the cornering strategy, we also show that 
    under certain conditions,
    such an $n$-state DFA $M$
    has a synchronizing word of length at most $(n-1)^2$ and thereby satisfies \v{C}ern\'y's conjecture. 
    Using the $f$-ordered strategy,
    we also show that a synchronizable DFA
    whose states have a certain partial order that is preserved by a set of short words
    also has a short synchronizing word, and we consider several consequences of this result.
    
    Finally, we consider how the cornering strategy can be applied to the problem of synchronizing the product of two DFAs $M_1, M_2$ that share a common alphabet, and we show that the product $M_1 \times M_2$
    often has a synchronizing word that is subquadratic in the number of states of $M_1 \times M_2$.
\end{abstract}
\maketitle

\section{Introduction}

A \textit{deterministic finite automaton} (or DFA) is a quintuple $ M = (Q,\Sigma,\delta,s, A)$ such that $Q$ is a finite \emph{state set}, $\Sigma$ is an \emph{input alphabet}, $\delta: Q \times \Sigma \rightarrow Q$ is a \emph{transition function}, $s \in Q$ is a \emph{start state}, and $A \subseteq Q$ is a set of \emph{accepting states} (or \emph{final states}).
Elements of $Q$ and $\Sigma$ are called \emph{states} and \emph{input symbols}, respectively.
The DFA $M$ also has an \emph{active state},
which is an element $q \in Q$ that identifies
the current state of the system.
Initially, the active state of $M$ is $s$;
then, given an active state $q$ of $M$, the active state 
can be updated to a new active state $q(\tau)$ by the action of an input word $\tau =  c_1\cdots c_k$ on $q$, where $ c_1,\dots,  c_k \in \Sigma$. Here, 
$$
q( c_1) = \delta(q, c_1)
$$
and 
$$
q(\tau) = \delta(q( c_1\cdots c_{k-1}),  c_k).
$$
If $\tau$ is the \emph{empty word} $\epsilon$ (the word of length $0$), then
$q(\tau) = q (\epsilon) = q$.
We say that $|Q|$ is the \emph{order} of $M$.
Note that every DFA $M$ that we consider is \emph{complete}, so that $\delta(q,c)$ is well defined for each $q \in Q$ and $c \in \Sigma$.
We sometimes
omit the starting state $s$ and the accepting state set $A$ when they are not necessary for our purposes; in this case, we write $M = (Q,\Sigma,\delta)$.

We write $\Sigma^*$ for the set of words generated by the alphabet $\Sigma$.
Note that each word $\tau \in \Sigma^*$ corresponds to a function $Q \rightarrow Q$ given by the action of $\tau$ on $Q$---that is, the function that maps each $q \in Q$ to $q(\tau)$.
These functions form a semigroup, and the empty word $\epsilon$ is an identity element of this semigroup.
This semigroup is called the \emph{transition monoid} of $M$.
As $Q$ is finite, it follows that the transition monoid of $M$ has at most $|Q|^{|Q|}$
distinct
elements and hence is finite.
On the other hand, given a finite monoid $S$, 
one may construct a DFA $M = (Q,\Sigma,\delta)$ with
transition monoid $S$ by letting $Q = \Sigma = S$ and 
letting $\delta(q, s) = qs$ for each $q \in Q$ and $s \in S$.
Therefore, DFAs and finite monoids are closely related.

Given a word $\tau \in \Sigma^*$, we say that the DFA $M = (Q,\Sigma,\delta,s,A)$ \emph{accepts} $\tau$ if $s(\tau) \in A$; otherwise, we say that $M$ \emph{rejects} $\tau$.
We write $L(M)$ for the set of all words $\tau \in \Sigma^*$ accepted by $M$, and we say that $L(M)$ is the \emph{language accepted by $M$}.
The DFA $M$ is \emph{minimal} if all states of $Q$ are reachable (for each $q \in Q$, there exists $\tau \in \Sigma^*$ satisfying $s(\tau) = q$)
and each pair of states of $Q$ is distinguishable (for each distinct pair $p,q \in Q$, there exists $\tau \in \Sigma^*$ for which exactly one of $p(\tau), q(\tau)$ is in $A$).
It is well known (see Theorem \ref{thm:unique-min}) that for each regular language $L \subseteq \Sigma^*$,
there is a unique minimal DFA $M$ for which $L(M) = L$. 
Therefore, for each alphabet $\Sigma$, there is a bijection between regular 
languages of $\Sigma^*$ and minimal DFAs with alphabet $\Sigma$.

A DFA $M = (Q,\Sigma,\delta)$ can be represented as an arc-labeled directed multigraph $G$ as follows.
We let $V(G) = Q$, so that the vertex set of $G$ is the state set of $M$.
Then, for each $q \in Q$ and $c \in \Sigma$,
we give $G$ an arc $(q, \delta(q,c))$ with the label $\psi((q,\delta(q,c))) = c$.
The multigraph $G$ is of regular out-degree $|\Sigma|$, and each arc $e = (p,q)$ with label $\psi(e) = c$ corresponds to the 
transition function assignment
$\delta(p,c ) = q$.
We also designate
a vertex $s \in V(G)$
as a \emph{start vertex} of $G$,
as well as a subset $A \subseteq V(G)$ of \emph{accepting vertices}.
As with the DFA $M$, we often omit $s$ and $A$ from the definition of $G$ when they are not necessary.
We often refer to $G$ simply as a \emph{digraph} with the understanding that $G$ may contain loops and parallel edges, and that the arcs of $G$ are labeled by the function $\psi$.

In a DFA represented by a digraph $G = (V,E,\psi)$, an active state corresponds to a vertex $q \in V(G)$.
We may define new active state $q(\tau)$ for an input word $\tau = c_1\cdots c_k \in \Sigma^*$ as the unique vertex $p \in V(G)$ such that there exists a directed walk $W = q, q_1,\dots,q_{k-1},p$ 
in $G$
satisfying 
$$
\psi(q,q_1)=c_1,\dots, \psi(q_i,q_{i+1})=c_{i+1},\dots, \psi(q_{k-1},p)=c_k.
$$
For example, given the DFA whose digraph is shown in Figure~\ref{Fig:DFA example}, if
the active state is $q$ and
the input is the word $\tau = \texttt{cc}$, then 
the active state updates to
$q(\tau)=p$. 
Given a walk $W$ with an edge sequence $(e_1, \dots, e_t)$, we often write $\psi(W) = \psi(e_1) \cdots \psi(e_t)$.
One can represent a DFA
with active state $q$ as a robot standing on the vertex $q$ in the associated digraph,
so that inputting the word $\tau$ to the robot commands the robot at $q$ to
follow the walk $W$ beginning at $q$ with $\psi(W)=\tau$.

\begin{figure}\label{Fig:DFA example}
\begin{center}
    \scalebox{0.85}{
        \begin{tikzpicture}[node distance={15mm}, thick, main/.style = {draw, circle}] 

\node[main][fill= black] (0) at (0,0) {}; 
    \node[fill=none] at (-1,0) (nodes) {$p$};
\node[main][fill= black] (1) at (0,2) {}; 
\node[main][fill= black] (2) at (0,4) {}; 
    \node[fill=none] at (-1,4) (nodes) {$q$};

\node[main][fill= black] (3) at (1.5,2) {}; 
\node[main][fill= black] (4) at (2.5,2) {}; 

\node[main][fill= black] (5) at (4,0) {}; 
\node[main][fill= black] (6) at (4,2) {}; 
\node[main][fill= black] (7) at (4,4) {};

\node[main][fill= white,draw=white] (z1) at (6,3) {};
\node[main][fill= white,draw=white] (z2) at (6,1) {};
\node[main][fill= white,draw=white] (y1) at (8,3) {};
\node[main][fill= white,draw=white] (y2) at (8,1) {};
\node[main][fill= white,draw=white] (x1) at (10,3) {};
\node[main][fill= white,draw=white] (x2) at (10,1) {};

\node[main][fill= white,draw=white] (z) at (6.5,2) {\texttt{a}};
\node[main][fill= white,draw=white] (z) at (8.5,2) {\texttt{b}};
\node[main][fill= white,draw=white] (z) at (10.5,2) {\texttt{c}};

\draw [->,blue] (z1) -- (z2);
\draw [->,ultra thick, dashed, red] (y1) -- (y2);
\draw [->,green!50!black, dotted] (x1) -- (x2);

\draw [->,blue] (2) -- (4);
\draw [->,ultra thick, dashed, red] (4) -- (3);
\draw [->,blue] (4) -- (0);

\draw [->,ultra thick, dashed, red] (2) -- (3);
\draw [->,ultra thick, dashed, red] (3) to [out=150,in=-80] (2);
\draw [->,blue] (3) -- (0);
\draw [->,blue] (7) -- (3);

\draw [->,ultra thick, dashed, red] (0) -- (1);
\draw [->,ultra thick, dashed, red] (1) -- (2);
\draw [->,green!50!black, dotted] (2) to [out=-135,in=135] (1);
\draw [->,green!50!black, dotted] (1) to [out=-135,in=135] (0);

\draw [->,ultra thick, dashed, red] (5) -- (6);
\draw [->,ultra thick, dashed, red] (6) -- (7);
\draw [->,green!50!black, dotted] (7) to [out=-45,in=45] (6);
\draw [->,green!50!black, dotted] (6) to [out=-45,in=45] (5);

\draw [->,blue] (5) -- (0);
\draw [->,blue] (6) -- (4);
\draw [->,ultra thick, dashed, red] (7) -- (2);

        \end{tikzpicture}
    }
\end{center}
\caption{An example of a DFA
represented as an edge-labeled directed
multigraph.
Loops are not drawn, and arcs with the same label receive the same colour.
The labels corresponding to the colours are shown.
}
\end{figure}

DFAs give a framework for representing a given machine, along with possible transitions between the machine's states. 
When using a machine with multiple states, it is often convenient to have a command that resets the machine from any state to some initial state. For instance, a machine's current state may become unknown
to a user, say after the machine receives accidental or random input. In these cases, a reset command that updates the machine to a prescribed state allows a user to regain control over the machine state and execute meaningful commands. 
Given a DFA $M = (Q,\Sigma,\delta)$ and a state
$z \in Q$, if there exists a word $\tau$ such that $q(\tau) = z$ for all $q \in Q$, then we say that $\tau$ is a \emph{$z$-synchronizing word}. If $M$ has a $z$-synchronizing word, then we say $M$ is \textit{$z$-synchronizable}. If the choice of $z$ is clear from context or is not relevant to a specific application, then  we say that $\tau$ is a \textit{synchronizing word} in $M$ and that $M$ is \textit{synchronizable}.
In addition to having direct machine applications, synchronizing words also have applications in coding theory; for more details,
see the thesis of Ryzhikov \cite{Ryzhikov}.

The formulation of synchronizing words 
stated above
was first considered in the early 1960s, most notably by \v{C}ern\'y in \cite{vcerny1964poznamka}, with the notion appearing first slightly earlier in \cite{laemmel1956general,laemmel1963study,liu1962some}. In \cite{vcerny1964poznamka} \v{C}ern\'y constructed, for each $n$, a DFA of order $n$ whose shortest synchronizing word has length exactly $(n-1)^2$. This construction naturally leads to the following conjecture, often called \v{C}ern\'y's conjecture, seemingly first proposed by Starke \cite{starke1966bemerkung}.
For more history on this conjecture, see \cite{volkov2022synchronization}.

\begin{conjecture}[\v{C}ern\'y's conjecture]
    If $M$ is a synchronizable DFA of order $n$, then $G$ has a synchronizing word of length at most $ 
    (n-1)^2$.
\end{conjecture}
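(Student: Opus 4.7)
The statement is the famous \v{C}ern\'y conjecture, open since 1964, so I should sketch a plan of attack while flagging from the outset that I do not expect a complete proof. The standard approach is the \emph{pair-compression} method. Given the current image set $S_i \subseteq V$, with $S_0 = V$, one chooses a word $\sigma_i$ with $|\{s(\sigma_i) : s \in S_i\}| < |S_i|$, and sets $S_{i+1}$ equal to this smaller image; after at most $n-1$ compressions the image is a singleton, so if each $|\sigma_i|$ is bounded by $n-1$ the concatenated word has length at most $(n-1)^2$. The reduction to pairs is the key step: it suffices to show that whenever $\{u,v\}$ is reached as an image, there is a word of length at most $n-1$ merging $u$ and $v$. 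This pair-merging bound is currently known only to be $O(n^2)$, which gives the cubic word-length bound $\binom{n}{2}(n-1)$ rather than the conjectured quadratic one.

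To try to tighten the pair-compression bound, I would invoke the cornering strategy promised in the abstract. The idea is to equip the subset lattice $2^V$ with a potential function $\Phi : 2^V \to \mathbb{Z}_{\geq 0}$ which is at most $(n-1)^2$ on $V$, is zero on singletons, and for which at every non-singleton $S$ there exists a letter $\alpha \in \Psi$ such that $\Phi(\{s(\alpha) : s \in S\}) < \Phi(S)$. If such a $\Phi$ can be produced universally, then greedily applying compressing letters yields the conjectured bound. The cornering results later in the paper construct explicit such $\Phi$ for geometric and poset-compatible DFAs, for instance via coordinate extrema in $\mathbb{R}^d$ or via chain lengths in a compatible partial order. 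The plan would therefore be to embed or lift an arbitrary synchronizable DFA into one of these well-structured families, then read off the bound from the corresponding potential.

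The main obstacle, and the reason the conjecture remains open, is precisely that no such universal potential or embedding is known: decades of averaging, rank, eigenvalue, and combinatorial arguments have all stalled around cubic upper bounds, and it is not even clear that quadratic-sized potentials exist in full generality. I therefore expect any plan along these lines to succeed only after significant structural hypotheses are imposed on $G$, which is exactly the program the remainder of the paper pursues; establishing \v{C}ern\'y's conjecture in full generality lies beyond the methods sketched here.
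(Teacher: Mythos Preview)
The paper does not prove this statement; it is \v{C}ern\'y's conjecture, recorded in the introduction as an open problem and used only to motivate the later results. There is therefore no ``paper's own proof'' to compare against, and you are right to flag at the outset that a complete argument is not to be expected.

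Your sketch of the pair-compression framework and its cubic barrier is accurate. One point worth correcting: your description of the cornering strategy as a potential $\Phi$ on the subset lattice $2^V$ that strictly decreases under some letter is not quite what the paper does. The paper's cornering strategy works at the level of individual vertices, not subsets: one fixes a sign function $f:\Psi^*\to\{-1,1\}$ and an $f$-corner $x$ (a vertex with $\dist^+_f(u,x) < \dist^+_f(x,u)$ for all $u\neq x$), and then repeatedly applies shortest $f$-positive walks to bring one unsynchronized robot at a time to $x$ while controlling how far the robot already at $x$ can wander. This yields a bound of the form $\tfrac{(n-1)d(d+1)}{2}$ with $d=\max_u \dist^+_f(u,x)$, which beats $(n-1)^2$ only when $d$ is small enough; the paper never claims this recovers \v{C}ern\'y's bound in general. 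So the ``embed an arbitrary synchronizable DFA into a well-structured family and read off the bound'' step in your plan is exactly the missing ingredient, and the paper makes no progress on it either.
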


Since being proposed, \v{C}ern\'y's conjecture has become a central open question in automata theory. Until recently, the most notable progress on \v{C}ern\'y's conjecture came from work of Pin \cite{pin1983two} and Frankl \cite{frankl1982extremal}, who showed that each synchronizable DFA of order $n$ has
a synchronizing word of length at most $(n^3-n)/6$. The current best upper bound on the length of a shortest synchronizing word in a synchronizable DFA of order $n$ is $(\frac{7}{48} + \frac{15625}{798768})n^3+o(n^3)$,
as shown by Shitov \cite{shitov2019improvement}. Shitov's improvement builds off two earlier papers by Szyku\l{}a \cite{szykula2018improving} and Trahtman \cite{trahtman2011modifying}. 
Thus, for a general synchronizable DFA, the best upper bound remains far from \v{C}ern\'y's conjecture, despite significant effort.

Notably, \v{C}ern\'y's conjecture has also been proven for some specific classes of DFAs. These include Eulerian DFAs \cite{kari2003synchronizing}, DFAs whose digraphs have
a Hamiltonian cycle in which every edge has the same label \cite{dubuc1998automates}, and DFAs which admit particular
orderings on states that are preserved under the action of any word \cite{ananichev2004synchronizing,ananichev2005synchronizing,eppstein1990reset}.
Another class of DFAs which follow \v{C}ern\'y's conjecture is the class of aperiodic DFAs.
A monoid $S$ is \emph{aperiodic} if for each $s \in S$, there exists an integer $m\geq 1$ such that $s^{m+1} = s^m$.
A DFA $M = (Q,\Sigma, \delta)$ is \emph{aperiodic}
if its transition monoid is aperiodic---that is, if
there exists an integer $m \geq 1 $ such that
for all $q \in Q$ and $\tau\in \Sigma^*$, $q(\tau^m) = q(\tau^{m+1})$.
Trahtman \cite{trahtman2007v}
showed that an $n$-state aperiodic DFA $M$ is synchronizable if and only if it has a universally reachable state,
and he showed that $M$ has a synchronizing word of length at most $\frac 12 n(n-1)$.
For more on
\v{C}ern\'y's conjecture and synchronizing DFAs more broadly, we recommend the 2022 survey by Volkov \cite{volkov2022synchronization}.

In this paper, we define an intuitive strategy for generating a synchronizing word in a synchronizable DFA.
We call our strategy the \emph{cornering strategy},
as it takes advantage of a state with similar properties to those of a vertex in a polytope.
We show that we can construct a synchronizing word $\tau$ for every synchronizable DFA $M$ using the cornering strategy and that $\tau$ is short when $M$ is well structured.
We show that the cornering strategy is particularly well suited for the class of \emph{difference DFAs},
which serve as a natural model for movement in real space.
In addition to applying the cornering strategy to difference DFAs, we also investigate the relationship between difference DFAs and other DFA classes.
In particular, we show that 
while difference DFAs are not aperiodic in general,
they share properties with aperiodic DFAs
that imply that they are synchronizable whenever they have a universally reachable state.
Using similar ideas to those of the cornering strategy,
we develop the \emph{$f$-ordering} strategy, which produces a synchronizing word $\tau$ for every synchronizable DFA $M$ using a certain partial order on the states of $M$. Furthermore, we show that $\tau$ is short whenever the partial order is well structured.

The paper is structured as follows.
In Section~\ref{sec:prelim}, we establish the preliminaries necessary for the rest of the paper, including definitions and examples.
In Section \ref{sec:diff},
we introduce the class of difference DFAs,
and 
we show that difference DFAs share a close connection with aperiodic DFAs.
 Next, in Section~\ref{sec:corner}, we use the cornering strategy to prove that 
 well-structured difference DFAs have short synchronizing words, and we give conditions for these DFAs to satisfy \v{C}ern\'y's conjecture. 
In Section~\ref{sec:f-ordered}, we develop the \emph{$f$-ordering strategy},
and we apply this tool to several existing DFA classes.
 Finally,
Section~\ref{sec:product} focuses on generating a word $\tau$ which is $z_1$-synchronizing in a DFA $M_1 = (Q_1, \Sigma,\delta_1)$ and $z_2$-synchronizing in a DFA $M_2 = (Q_2,\Sigma,\delta_2)$, where the states $z_1$ and $z_2$ are given beforehand. This problem is natural given situations where two different models of machines 
receive
radio commands over the same channel.

\section{Preliminaries}
\label{sec:prelim}
This section establishes definitions and notation which are vital to the remainder of the paper.
Following Eppstein
\cite{eppstein1990reset},
we typically use the symbol $\tau$ to denote words,
and we also use the symbols $\pi$ and $\alpha$ when necessary.
Similar to Trahtman \cite{trahtman2007v},
we write $q(\tau)$ for
the image of a state $q$ under the action of a word $\tau$.
We typically use the symbol $z$ to denote a state to which the action of a synchronizing word maps all states in DFA.
We often use bold symbols to represent vectors in $\mathbb R^d$.

Given a DFA $M = (Q, \Sigma, \delta)$,
we say that a state $p \in Q$ is \emph{reachable} from a state $q \in Q$ if there exists a word $\tau \in \Sigma^*$ for which $q(\tau) = p$.
We say that $p$ is \emph{universally reachable}
if $p$ is reachable from $q$ for every $q \in Q$.
Trahtman \cite{trahtman2007v} uses the term \emph{sink} for a universally reachable state, but we avoid this term due to its alternative meaning in graph theory.
We say that
$M$ is \textit{strongly connected} if
every state of $M$ is universally reachable.
We observe that $M$ is strongly connected if
and only if its underlying digraph is strongly connected. 
 For an example of a DFA  that is not strongly connected, see Figure~\ref{Fig:DFA example}.
 For an example of a strongly connected DFA,
 see Figure~\ref{Fig:Diff DFA}. 
It is sometimes useful to have an even stronger notion of connectivity in a DFA. To this end, we say a DFA $M$
with a digraph representation
$G = (V,E,\psi)$ is \textit{bidirectional} if for every edge $(p,q) \in E$, there exists an edge $(q,p)\in E$. 
Then, we say $M$ is \emph{bidirectional connected} if $M$ is bidirectional and strongly connected.

Given two DFAs $M_1 = (Q_1, \Sigma, \delta_1,s_1,A_1)$ and $M_2 = (Q_2, \Sigma, \delta_2,s_2,A_2)$  on the same alphabet, the \emph{product DFA} of $M_1$ and $M_2$, written $M_1 \times M_2$, 
is the DFA $(Q_1 \times Q_2, \Sigma, \delta, (s_1,s_2), A_1 \times A_2)$, where for each $(q_1, q_2) \in Q_1 \times Q_2$ and $c \in \Sigma$, $\delta((q_1,q_2), c) = (\delta(q_1,c), \delta (q_2,c))$.
We note that an active state of $M_1 \times M_2$ represents an active state from each of $M_1$ and $M_2$. In particular, for all $p \in Q_1$, $q \in Q_2$, and $\tau\in \Sigma^*$, $(p,q)(\tau) = (p(\tau),q(\tau))$.

The notion of an isomorphism between DFAs is defined as follows.

\begin{definition}
    Let $M_1 = (Q_1, \Sigma_1, \delta_1, s_1, A_1)$ and $M_2 = (Q_2, \Sigma_2, \delta_2, s_2, A_2)$ be two DFAs. We say that $M_1$ and $M_2$ are \emph{isomorphic} if there exists
    bijections $\phi: Q_1 \rightarrow Q_2$ and $\phi':\Sigma_1 \rightarrow \Sigma_2$ such that the following hold:
    \begin{enumerate}
        \item For each $q \in Q_1$ and $c \in \Sigma_1$, $\phi(\delta_1(q,c)) = \delta_2 (\phi(q), \phi'(c))$;
        \item $\phi(s_1) = s_2$;
        \item $\phi(A_1) = A_2$.
    \end{enumerate}
    In the case that the starting state and accepting states of $M_1$ and $M_2$ are not specified, then we only require condition (1) to hold.
\end{definition}

The following well-known theorem appears as Theorem 5.5 of \cite{Rich}.
\begin{theorem}
\label{thm:unique-min}
    If $L$ is a regular language, then there exists a unique minimal DFA $D$ up to isomorphism for which $L(D) = L$.
\end{theorem}

\begin{figure}\label{Fig:Diff DFA}
\begin{center}
    \scalebox{0.85}{
        \begin{tikzpicture}[node distance={15mm}, thick, main/.style = {draw, circle}]

\node[main][fill= white,draw=white] (z) at (10,0.7) {$(0,1)$};
\node[main][fill= white,draw=white] (z) at (12,0.7) {$(0,-1)$};
\node[main][fill= white,draw=white] (z) at (15,2.75) {$(1,0)$};
\node[main][fill= white,draw=white] (z) at (15,0.75) {$(-1,0)$};

\node[main][fill= white,draw=white] (z2) at (10,3) {};
\node[main][fill= white,draw=white] (z1) at (10,1) {};
\node[main][fill= white,draw=white] (y1) at (12,3) {};
\node[main][fill= white,draw=white] (y2) at (12,1) {};
\node[main][fill= white,draw=white] (x1) at (14,2.25) {};
\node[main][fill= white,draw=white] (x2) at (16,2.25) {};
\node[main][fill= white,draw=white] (w1) at (14,1.25) {};
\node[main][fill= white,draw=white] (w2) at (16,1.25) {};

\draw [->,blue] (z1) -- (z2);
\draw [->,ultra thick, dashed, red] (y1) -- (y2);
\draw [->,green!50!black, dotted] (x1) -- (x2);
\draw [->,cyan, ultra thick, densely dotted] (w2) -- (w1);

\node[main][fill= black] (0) at (0,0) {}; 
\node[main][fill= black] (1) at (0,2) {}; 
\node[main][fill= black] (2) at (0,4) {}; 

\node[main][fill= black] (3) at (2,0) {}; 
\node[main][fill= black] (4) at (2,2) {}; 
\node[main][fill= black] (5) at (2,4) {}; 

\node[main][fill= black] (6) at (4,0) {}; 
\node[main][fill= black] (7) at (4,2) {}; 
\node[main][fill= black] (8) at (4,4) {};

\node[main][fill= black] (9) at (6,0) {}; 
\node[main][fill= black] (10) at (6,2) {}; 
\node[main][fill= black] (11) at (6,4) {}; 

\node[main][fill= black] (12) at (8,0) {}; 
\node[main][fill= black] (13) at (8,2) {}; 
\node[main][fill= black] (14) at (8,4) {}; 

\draw [->,blue] (0) to [out=135,in=-135] (1);
\draw [->,ultra thick, dashed, red] (1) -- (0);
\draw [->,blue] (1) to [out=135,in=-135] (2);
\draw [->,ultra thick, dashed, red] (2) -- (1);

\draw [->,green!50!black, dotted] (0) -- (3);
\draw [->,cyan, ultra thick, densely dotted] (3) to [out=-135,in=-45](0);

\draw [->,green!50!black, dotted] (1) -- (4);

\draw [->,green!50!black, dotted] (2) -- (5);
\draw [->,cyan, ultra thick, densely dotted] (5) to [out=135,in=45](2);

\draw [->,blue] (3) to [out=135,in=-135] (4);
\draw [->,ultra thick, dashed, red] (4) -- (3);
\draw [->,blue] (4) to [out=135,in=-135] (5);
\draw [->,ultra thick, dashed, red] (5) -- (4);

\draw [->,green!50!black, dotted] (3) -- (6);
\draw [->,cyan, ultra thick, densely dotted] (6) to [out=-135,in=-45](3);

\draw [->,green!50!black, dotted] (4) -- (7);

\draw [->,green!50!black, dotted] (5) -- (8);
\draw [->,cyan, ultra thick, densely dotted] (8) to [out=135,in=45](5);

\draw [->,blue] (6) to [out=135,in=-135] (7);
\draw [->,ultra thick, dashed, red] (7) -- (6);
\draw [->,blue] (7) to [out=135,in=-135] (8);
\draw [->,ultra thick, dashed, red] (8) -- (7);

\draw [->,green!50!black, dotted] (6) -- (9);
\draw [->,cyan, ultra thick, densely dotted] (9) to [out=-135,in=-45](6);

\draw [->,cyan, ultra thick, densely dotted] (10) -- (7);

\draw [->,green!50!black, dotted] (8) -- (11);

\draw [->,blue] (9) to [out=135,in=-135] (10);
\draw [->,ultra thick, dashed, red] (10) -- (9);
\draw [->,blue] (10) to [out=135,in=-135] (11);
\draw [->,ultra thick, dashed, red] (11) -- (10);

\draw [->,cyan, ultra thick, densely dotted] (12) -- (9);

\draw [->,cyan, ultra thick, densely dotted] (13) -- (10);

\draw [->,green!50!black, dotted] (11) -- (14);
\draw [->,cyan, ultra thick, densely dotted] (14) to [out=135,in=45](11);

\draw [->,blue] (12) to [out=135,in=-135] (13);
\draw [->,ultra thick, dashed, red] (13) -- (12);
\draw [->,blue] (13) to [out=135,in=-135] (14);
\draw [->,ultra thick, dashed, red] (14) -- (13);

        \end{tikzpicture}
    }
\end{center}
\caption{A digraph representation $G$ of a difference DFA with a vertex set $V(G) \subseteq \mathbb{R}^2$, with no loops drawn and with edges with the same label receiving the same colour. In this example, the vertex set $V$ is equal to the set of integer points in the rectangle $[0,4] \times [0,2]$.}
\end{figure}

Given a DFA $M = (Q,\Sigma,\delta)$
with a digraph
$G$ and a  pair of states $p,q \in Q$, we write $\dist(p,q)$ for the number of edges in a
shortest directed path in $G$
from $p$ to $q$. From a graph theoretic perspective,
$\dist(p,q)$ is the ordinary directed graph distance.
If there is no walk in $G$ from $p$ to $q$, then we write $\dist(p,q) = \infty$.
Notice that if $M$ is a bidirectional connected DFA, then for all $p,q \in Q$, $\dist(p,q)=\dist(q,p)$.

\section{Difference DFAs and Geodesically Aperiodic DFAs}
\label{sec:diff}


One problem that is easily modeled by DFAs is movement in real life and simulated spaces. DFAs with vertices in $\mathbb{R}^2$ and $\mathbb{R}^3$
are particularly suited 
for modeling the movement of robots and simple machines in a grid or some other discrete space.
In these applications, a synchronizing word is often useful.
For example, given a delivery robot in a warehouse whose movement is modeled by a DFA with vertices in $\mathbb R^2$,
a synchronizing word corresponds to 
a universal command that returns the robot from any location to 
a fixed point, such as a loading dock or charging port.
Such a universal command is useful for simple machines that are unable to relay data to the controller about their current location or state. 
Furthermore,
the inclusion of synchronizing words throughout a sequence of commands prevents a single incorrect input from spoiling 
all subsequent commands.

Another related application comes from taking our DFA to be a subgraph of a grid.
In such instances we can model how a robot moves through a maze. 
In particular, if such a maze DFA has a synchronizing word, then this 
allows
the robot to deterministically solve the maze without any knowledge of its current (or initial) position.

Not every maze or warehouse can be modeled using a rectangular grid. 
For instance, in a two-dimensional maze on a triangular grid or a rectangular grid in three dimensions, up to six paths can meet at a common point.
No such maze can be modeled using a square grid, where intersections are vertices and edges are paths between intersections, as square grids have maximum degree $4$.
Hence, it is worth considering the same problems for DFAs that are not subgraphs of grids.

We formalize this notion
as follows.
A \emph{difference DFA}
$M = (Q,\Sigma,\delta)$ 
in $\mathbb R^d$
is defined by letting $Q \subseteq \mathbb R^d$,
$\Sigma \subseteq \{\mb q - \mb p : \mb p, \mb q \in Q\}$,
and $\delta(\mb q, \mb c) \in \{\mb q, \mb q + \mb c\} \cap Q$ for each $\mb q \in Q$ and $\mb c \in \Sigma$.
We require each vector in $\Sigma$ to be nonzero.
One may imagine that each symbol $\mb c \in \Sigma$ commands a robot
to move along a translation vector $\mb c$ and that the robot obeys the command if and only if the translation vector $\mb c$ can be traversed without encountering an obstacle.
In this way, the digraph representation $G = (V,E,\psi)$ of $M$ has a vertex set $V(G) = Q$ that is a subset of $\mathbb R^d$.
Furthermore, for each ordered pair $\mb p, \mb q \in V(G)$,
if $E(G)$ contains an arc $e = (\mb p, \mb q)$, then
$\mb q - \mb p \in \Sigma$ and 
$\psi(e) = \mb q - \mb p$.
Otherwise, either $\mb q - \mb p \not \in \Sigma$,
or 
$E$ contains a loop $e' = (\mb p, \mb p)$ satisfying $\psi(e') = \mb q - \mb p$.
 Notice that $M$ 
 need not be bidirectional or strongly connected. For an example of a difference DFA, see Figure~\ref{Fig:Diff DFA}.

In this section, we explore the relationship between difference DFAs and 
other existing classes of DFAs.
In particular, we will see that difference DFAs have a close relationship with \emph{aperiodic DFAs}. While difference DFAs are not aperiodic in general, they share important properties with aperiodic DFAs
that can help us analyze their synchronizing words.

Recall that a DFA $M = (Q,\Sigma, \delta)$ is \emph{aperiodic}
if there exists an integer $m \geq 1 $ such that
for all $q \in Q$ and $\tau\in \Sigma^*$, $q(\tau^m) = q(\tau^{m+1})$.
It is easy to construct examples of difference DFAs that are not aperiodic.
In
Figure~\ref{Fig: Diff not aperiodic} we provide an example of a difference DFA in $\mathbb{R}^2$ that is not  aperiodic. 
Interestingly, the example is bidirectional connected and is a subgraph of the $3\times 3$ square grid. We highlight the case of subgraphs of square grids as it is of natural interest.

\begin{figure}[h]\label{Fig: Diff not aperiodic}
\begin{center}
    \scalebox{0.85}{
        \begin{tikzpicture}[node distance={15mm}, thick, main/.style = {draw, circle}]

\pgfmathsetmacro{\a}{3}

\node[main][fill= white,draw=white] (z) at (10-\a,0.7) {$(0,1) = \uparrow$};
\node[main][fill= white,draw=white] (z) at (12-\a,0.7) {$(0,-1) = \downarrow$};
\node[main][fill= white,draw=white] (z) at (15-\a,2.75) {$(1,0) = \rightarrow$};
\node[main][fill= white,draw=white] (z) at (15-\a,0.75) {$(-1,0) = \leftarrow$};

\node[main][fill= none,draw=none] (z2) at (10-\a,3) {};
\node[main][fill= none,draw=none] (z1) at (10-\a,1) {};
\node[main][fill= none,draw=none] (y1) at (12-\a,3) {};
\node[main][fill= none,draw=none] (y2) at (12-\a,1) {};
\node[main][fill= none,draw=none] (x1) at (14-\a,2.25) {};
\node[main][fill= none,draw=none] (x2) at (16-\a,2.25) {};
\node[main][fill= none,draw=none] (w1) at (14-\a,1.25) {};
\node[main][fill= none,draw=none] (w2) at (16-\a,1.25) {};

\draw [->,blue] (z1) -- (z2);
\draw [->,ultra thick, dashed, red] (y1) -- (y2);
\draw [->,green!50!black, dotted] (x1) -- (x2);
\draw [->,cyan, ultra thick, densely dotted] (w2) -- (w1);

\node[fill=none] at (2,2.5,0) (nodes) {$q$};
\node[fill=none] at (2,4.5) (nodes) {$p$};
\node[main][fill= black] (0) at (0,0) {}; 
\node[main][fill= black] (1) at (2,0) {}; 
\node[main][fill= black] (2) at (4,0) {}; 
\node[main][fill= black] (3) at (0,2) {}; 
\node[main][fill= black] (4) at (2,2) {}; 
\node[main][fill= black] (5) at (4,2) {}; 
\node[main][fill= black] (6) at (0,4) {}; 
\node[main][fill= black] (7) at (2,4) {};
\node[main][fill= black] (8) at (4,4) {};

\draw [->,green!50!black, dotted] (0) -- (1);
\draw [->,green!50!black, dotted] (1) -- (2);
\draw [->,cyan, ultra thick, densely dotted] (1) to [out=135,in=45](0);
\draw [->,cyan, ultra thick, densely dotted] (2) to [out=135,in=45](1);
\draw [->,green!50!black, dotted] (6) -- (7);
\draw [->,green!50!black, dotted] (7) -- (8);
\draw [->,cyan, ultra thick, densely dotted] (7) to [out=135,in=45](6);
\draw [->,cyan, ultra thick, densely dotted] (8) to [out=135,in=45](7);

\draw [->,blue] (0) to [out=135,in=-135] (3);
\draw [->,blue] (3) to [out=135,in=-135] (6);
\draw [->,blue] (2) to [out=135,in=-135] (5);
\draw [->,blue] (5) to [out=135,in=-135] (8);
\draw [->,blue] (1) to [out=135,in=-135] (4);
\draw [->,ultra thick, dashed, red] (3) -- (0);
\draw [->,ultra thick, dashed, red] (6) -- (3);
\draw [->,ultra thick, dashed, red] (5) -- (2);
\draw [->,ultra thick, dashed, red] (8) -- (5);
\draw [->,ultra thick, dashed, red] (4) -- (1);

        \end{tikzpicture}
    }
\end{center}
\caption{An example of a bidirectionally connected difference DFA in $\mathbb{R}^2$ whose digraph is a subgraph of a the square lattice, and that is not an aperiodic DFA. To see that this DFA is not an aperiodic DFA,
observe that when $\tau = (\leftarrow,\downarrow, \rightarrow, \downarrow, \rightarrow, \uparrow , \uparrow, \leftarrow)$, $p(\tau) = q$ and $q(\tau) = p$. As a result, the action of $\tau$ cyclically permutes $p$ and $q$,
implying this DFA is not aperiodic.}
\end{figure}

In addition to showing that certain difference DFAs are not aperiodic, we can in fact show that 
every DFA in the class of \emph{unitary DFAs} is isomorphic to a difference DFA.
Since many unitary DFAs are not aperiodic, this 
fact gives us many examples of difference DFAs that are not aperiodic.
Unitary DFAs were introduced by  Brzozowski and Szyku\l{}a \cite{Brz} and defined as follows:
A DFA $M = (Q,\Sigma, \delta)$ is \emph{unitary} if for each $c \in \Sigma$,
$\delta(q,c) = q$ for at least $|Q| - 1$ elements $q \in Q$.
In other words, if $\delta$ gives $c \in \Sigma$ a non-identity action on $Q$, then there is a unique element $q \in Q$ for which $\delta(q,c) \neq q$.
Writing $\delta(q,c ) = p$,
we write the function $\delta(\cdot, c)$ as $q \rightarrow p$.
For an example of a unitary DFA that is not aperiodic, consider the DFA $M = (Q,\Sigma,\delta)$, where $Q = \{p,q,r\}$ and $\{\delta(\cdot, c): c \in \Sigma\}$ consists of the functions $p \rightarrow q$, $q \rightarrow r$, and $r \rightarrow p$.
Then, the word $\tau \in \Sigma^*$ corresponding to the composition of $q \rightarrow r, p \rightarrow q, r \rightarrow p$ (applied from left to right) cyclically permutes the states $p$ and $q$, and hence $p(\tau^m) \neq p(\tau^{m+1})$ for all $m \geq 1$.

\begin{proposition}
    Every unitary DFA is isomorphic to a difference DFA. 
\end{proposition}
\begin{proof}
    Let $M = (Q, \Sigma, \delta)$ be a unitary DFA.
    We construct an isomorphic
    difference DFA in the form of an
    edge-labeled 
    digraph $G = (V,E,\psi)$ 
    as follows.
    
    Let $V \subseteq \mathbb N$ be a Sidon set of size $|Q|$---that is, a set with the property that if $p_1, p_2, q_1, q_2 \in Q$ satisfy $q_2 - q_1 = p_2 - p_1$, then $(p_1, p_2) = (q_1,q_2)$. 
    One possible choice for $V$ is $\{2^k : 0 \leq k \leq |Q| - 1\}$. 
    Fix some bijection $\phi: Q\rightarrow V$.
    We construct an alphabet $\Sigma'$
    for $G$ and a bijection $\phi':\Sigma \rightarrow \Sigma'$ as follows.
    For each $c \in \Sigma$
    for which $\delta( \cdot, c) = p \rightarrow q$,
    let $G$ have an arc $e = (\phi(p),\phi(q))$ with label $\psi(e) = \phi(q)-\phi(p)$.
    Add $\psi(e)$ to $\Sigma'$, and define $\phi'(c) = \psi(e)$.
    For each remaining $c \in  \Sigma$ for which $\delta(\cdot, c)$ is the identity transformation, let $x$ be some
    real number that is not yet in $\Sigma'$ 
    and that cannot be obtained as a difference of two values in $V$.
    In particular, 
    every non-integer $x \not \in \Sigma'$ is a valid choice.
    Add $x$ to $\Sigma$, and define $\phi'(c) = x$.
    Since $V$ is a Sidon set, each element $\psi(e)$ added to $\Sigma'$ is distinct, and hence $\phi'$ is a bijection.
    It is easy to see that the bijections $\phi$ and $\phi'$ give an isomorphism between $M$ and the DFA defined by $G$.
\end{proof}

Next, we show that the language of every DFA whose transition monoid is commutative and aperiodic can be represented by a difference DFA.

\begin{proposition}
\label{prop:comm-ap}
    If $M = (Q,\Sigma,\delta,s,A)$ is an
    aperiodic DFA whose transition monoid is commutative, then
    there exists a difference DFA $D$ for which $L(M) = L(D)$ after relabeling the elements of $\Sigma$.
\end{proposition}
\begin{proof}
    Suppose that $M = (Q,\Sigma, \delta,s,A)$ is a DFA whose transition monoid is commutative and aperiodic.
    Let $m \geq 1$ be an integer such that $q(\tau^{m+1}) = q(\tau^m)$ for each $q \in Q$ and $\tau \in \Sigma^*$.
    Given a word $\tau \in \Sigma^*$ and a symbol $c \in \Sigma$, 
    let $t_c$ be the number of times $c$ appears in $\tau$.
    We define the \emph{type} of $\tau$
    as a vector $(k_c)_{c \in \Sigma}$, where $k_c = t_c$ if $t_c \leq m$ and 
    $k_c = m$ if $t > m_c$.
    As $M$ has a commutative transition monoid, the question of whether $M$ accepts or rejects $\tau$ is uniquely determined by the type of $\tau$. 
    
    Now, we construct a difference DFA $D = (Q_{D}, \Sigma', \delta',\mb s_{D},A')$ 
    as follows.
    We let $\mathbb R^{\Sigma}$ be a vector space whose standard basis vectors $\mb e_c$ are indexed by the elements $c \in \Sigma$.
    We let $Q_{D}$ be the set of integer points in $[0,m]^{|\Sigma|}$, and we let $\mb s_{D} = \mb 0$.
    We let $\Sigma' = \{\mb e_c: c \in \Sigma\}$.
    For each vector $\mb q \in Q'$ and $\mb e_c \in \Sigma'$,
    we let $\delta'(\mb q,\mb e_c) = \mb q + \mb e_c$ if $\mb q + \mb e_c \in Q'$; otherwise, we let $\delta'(\mb q, \mb e_c) = \mb q$.
    Finally, for each type $(k_c)_{c \in \Sigma}$ whose words $M$ accepts, we let $\sum_{c \in \Sigma} k_c \mb e_c \in A'$.
    This completes the construction of $D$.

    We claim that $M$ accepts $\tau = c_1\cdots c_r$ if and only if $D$ accepts $\tau' = \mb e_{c_1} \cdots \mb e_{c_r}$.
    Indeed, let $\tau \in \Sigma^*$.
    If $\tau \in L(M)$,
    then
    $M$ accepts words of the same type as $\tau$.
    We write $(k_c)_{c \in \Sigma}$ for the type of $\tau$.
    By construction,  $\sum_{c \in \Sigma} k_c \mb e_c \in A'$,
    so $D$ accepts $\tau' = \mb e_{c_1}, \dots, \mb e_{c_r}$.
    A similar argument shows that if $\tau \not \in L(M)$, then $D$ rejects $\tau'$. Thus, the claim holds.
    Therefore, when the elements of $\Sigma$ are relabeled using the bijection $\phi:\Sigma \rightarrow \Sigma'$ that maps $c \mapsto \mb e_c$,
    we have $L(M) = L(D)$.
\end{proof}

Next, we show that the language of a strongly connected difference DFA can be represented using a minimal difference DFA.
Using this fact, we establish a condition that limits the DFAs whose languages can be represented by a difference DFA.

\begin{proposition}
\label{prop:diff-min-SC} 
If $D = (Q,\Sigma,\delta,\mb s,A)$ is a strongly connected
difference DFA, then $L(D) = L(D')$ for some minimal
difference DFA $D'$.
\end{proposition}
\begin{proof}
    Suppose that $D = (Q,\Sigma,\delta,\mb s,A)$ is a strongly connected difference DFA.
    If every pair of states $\mb p, \mb q \in Q$ is distinguishable (so that exactly one of $\mb p(\tau), \mb q(\tau)$ belongs to $A$ for some $\tau \in \Sigma^*$), then $D$ is minimal. Therefore, we assume that $Q$ contains at least two distinct states that are indistinguishable. We define an equivalence relation $\sim$ so that for each $\mb q_1, \mb q_2 \in Q$, $\mb q_1 \sim \mb q_2$ if $\mb q_1$ and $\mb q_2$ are indistinguishable. 
    We partition $Q$ into equivalence classes using the relation $\sim$.
    By considering $\delta(\cdot, \epsilon)$, we observe that each equivalence class $K$ intersects $A$ either in $K$ or $\emptyset$.

    \begin{claim}
    \label{claim:shift}
        Suppose that $K_1$ and $K_2$ are two equivalence classes in $Q$ and that $\mb q' \in K_1$ and $\mb p' \in K_2$ satisfy $\delta(\mb q', \mb c) = \mb p'$ for some $\mb c \in \Sigma$. Then,
        $\delta(\mb q, \mb c) = \mb q + \mb c \in K_2$ for all $\mb q \in K_1$. Furthermore,
        $K_2 = K_1 + \mb c$.
    \end{claim}
    \begin{proof}
        First, we observe that for each $\mb q \in K_1$, $\delta(\mb q, \mb c) \in K_2$. 
        Indeed, if $\delta(\mb q, \mb c) \not \in K_2$,
        then $\delta(\mb q, \mb c)$ and $\delta(\mb q', \mb c)$ belong to distinct equivalence classes. This implies that for some $\tau \in \Sigma^*$, exactly one of $\delta(\mb q, \mb c \tau)$ and $\delta(\mb q', \mb c \tau)$ is accepted by $D$, contradicting the indistinguishability of $\mb q$ and $\mb q'$.

        Since $\delta(\mb q,\mb c) \in K_2$ for each $\mb q \in K_1$, it follows that $\delta(\mb q, \mb c) \neq \mb q$ for each $\mb q \in K_1$. Therefore, for each $\mb q \in K_1$, the definition of a difference DFA implies that $\delta(\mb q, \mb c) =\mb  q + \mb c$, implying that $\mb q+\mb c \in K_2$.
        Therefore, $K_2$ contains the shifted set $K_1 + \mb c$ as a subset.

        Now, 
        let $G$ be the edge-labeled digraph that represents $D$. As $D$ is strongly connected, $G$ is strongly connected as well. Next, let $H$ be the quotient graph obtained from $G$ by contracting each equivalence class to a single vertex. 
        It is easy to see that $H$ is also strongly connected. Furthermore, if some vertex of
        $H$ corresponding to the class $K_1$ has an out-neighbour corresponding to a class $K_2 \neq K_1$,
        then 
        the fact that $K_2 \supseteq K_1 + \mb c$
        implies that $|K_2| \geq |K_1|$.
        As $H$ is strongly connected, this implies that all equivalence classes contain the same number of states. Therefore, $K_2 = K_1 + \mb c$.
        In other words, for some set $S \subseteq Q$,
        every equivalence class is a translation of the set $S$.
    \end{proof}

    Now, define $D' = (Q',\Sigma, \delta',\mb s', A')$ as follows. 
    Let
    $Q'$ be obtained from $Q$ by replacing each equivalence class $K$ with the single point at the arithmetic mean of the points in $K$.
    For each $\mb q' \in Q'$ and $\mb c \in \Sigma$, let $\delta'(\mb q', \mb c) = \mb p'$ if and only if $\delta'(\mb q,\mb c) = \mb p$
    for some states $\mb q,\mb p$ in the respective equivalence classes corresponding to $\mb q', \mb p'$. By Claim \ref{claim:shift}, $\delta'$ is well defined and obeys the key property of a difference DFA.
    Let $\mb s'$ be the point obtained from the equivalence class containing $\mb s$, and let $A'$ be the set of points obtained from the equivalence classes which are subsets of $A$.
    We show by induction on $k$ that for each word $\tau = c_1 \cdots c_k$ and state $\mb q \in Q$ corresponding to $\mb q' \in Q'$,
    $\mb q(\tau) \in A$ if and only if $\mb q'(\tau) \in A'$.
    For $k = 0$, we have $\tau = \epsilon$,
    so that $D$ and $D'$ both accept $\tau$ if and only if the equivalence class containing $\mb s$ is a subset of $A$.
    Now, suppose that $k \geq 1$, and write $\tau =\mb c \tau'$ for some $\mb c \in \Sigma$.
    Then, $\mb q(\tau) =\mb  q(\mb c) (\tau')$, and $\mb q'(\tau) = \mb q'(\mb c) (\tau')$.
    If $\delta(\mb q,\mb c) = \mb q$, then Claim \ref{claim:shift} implies that $\delta'(\mb q',\mb c) = \mb q'$, and the conclusion holds by induction. 
    Otherwise, $\delta(\mb q,\mb c) = \mb q+\mb c$, and then Claim \ref{claim:shift} implies that $\delta'(\mb q',\mb c) =\mb q' + \mb c$, and again the conclusion holds by induction. Therefore, $L(D) = L(D')$.
    From the same conclusion, we also see that each distinct pair of states in $D'$ is distinguishable, implying that $D'$ is minimal. This completes the proof.
\end{proof}

\begin{proposition}
    Let $M = (Q,\Sigma,\delta,s,A)$ be a minimal
    strongly connected
    DFA.
    Suppose that $M$ has a symbol $c \in \Sigma$ and a state $q \in Q$ for which the preimage of $q$ under $\delta( \cdot, c)$ contains at least three states.
    Then, for every difference DFA $D$, $L(M) \neq L(D)$.
\end{proposition}
\begin{proof}
    Suppose that the proposition is false.
    Let $M = (Q,\Sigma,\delta,s,A)$ be a minimal strongly connected DFA, and let $D = (Q', \Sigma, \delta', \mb s', A')$ be a difference DFA for which $L(M) = L(D)$.
    Let $M$ and $D$ be chosen so that the number of states of $D$ is minimized. 
    Clearly, every state of $D$ is reachable from $\mb s'$.
    As $L(M) = L(D)$, we assume that 
    $Q' \subseteq \mathbb R^d$ and that 
    the elements of $\Sigma$ are $d$-length vectors
    obtained as the differences of point pairs in $Q'$.
    Let $G$ be an edge-labeled digraph on the vertex-set $Q'$ that represents $D$, and let $Q'' \subseteq Q'$ be a subset for which $G[Q'']$ is strongly connected component that is a sink in the condensation graph of $G$.
    In particular, every directed walk in $G$ beginning in $Q''$ also ends in $Q''$.
    Let $\mb s_0 \in Q''$, and let $\tau_0 \in \Sigma^*$ be a word for which $\mb s'(\tau_0) = \mb s_0$.

    We claim that $Q'' = Q'$. Indeed, let $D'$ be the difference DFA $(Q'', \Sigma, \delta', \mb s_0, A' \cap Q'')$.
    Since no walk beginning in $Q''$ leaves $Q''$, $D'$ is complete. Furthermore, for each $\tau \in \Sigma^*$,
    $\tau_0 \tau \in L(M)$ if and only if $D'$ 
    accepts $\tau$.
    Therefore, letting $M' = (Q,\Sigma,\delta,s(\tau_0),A)$,
    $L(M') = L(D')$. 
    As $M$ is minimal and strongly connected, $M'$ is also minimal and strongly connected.
    As $D$ has a minimum number of states, it follows that $D$ and $D'$ both have the same number of states, implying that $Q'' = Q'$.
    In particular, $D$ is a strongly connected DFA.
    As the number of states of $D$ is minimized, Theorem \ref{thm:unique-min} implies that $D$ is minimal.

    Let $q \in Q$ and $\mb c \in \Sigma$ be such that $\delta(q_1, \mb c) = \delta(q_2, \mb c) = \delta(q_3, \mb c) = q$ for three distinct 
    states $q_1, q_2, q_3, \in Q$.
    As $M$ is strongly connected, for each $i \in \{1,2,3\}$, there exists $\tau_i \in \Sigma^*$ for which $s( \tau_i) = q_i$.
    
    For $i \in \{1,2,3\}$,
    write $\mb x_i = \mb s' (\tau_i)$.
    Consider a distinct pair $i,j \in \{1,2,3\}$.
    By minimality of $M$,
    there exists
    a word $\pi$ for which $M$ accepts $ \tau_i \pi$ and rejects $ \tau_j \pi$.
    Hence, $D'$ accepts $\tau_i \pi$ and rejects $\tau_j \pi$. It follows that $ \delta'(\mb s',\tau _i) \neq \delta'(\mb s', \tau_j) $, or in other words, $\mb x_i \neq \mb x_j$.
    Therefore, $\mb x_1, \mb x_2, \mb x_3$ are all distinct.

    Now, since $D$ is a difference DFA, it follows without loss of generality that $\delta'(\mb x_1, \mb c) \neq \delta'(\mb x_2, \mb c)$.
    Since $D$ is minimal,
    there exists some word $\pi'$ for which 
     $\delta'(\mb x_1, \mb c \pi') \in A'$
     and
     $\delta'(\mb x_2, \mb c \pi') \not \in  A'$.
    Equivalently, $D$ accepts $\tau_1 \mb c \pi'$ and rejects $ \tau_2 \mb c \pi'$.
    This implies that $M$ accepts $ \tau_1 \mb c \pi$ and rejects $ \tau_2 \mb c \pi$.
    However, since $\delta(s, \tau_1 \mb c)= \delta(s, \tau_2 \mb c)
     = q $, this gives us a contradiction.
\end{proof}

Although difference DFAs are not aperiodic in general, they do belong to a larger class of DFAs that share many important properties with aperiodic DFAs, which we define below.
Given a DFA $M= (Q,\Sigma,\delta)$ and a state subset $P \subseteq Q$ with $|P| \geq 2$,
we say that a word $\tau \in \Sigma^*$
\emph{cyclically permutes $P$}
if the action of $\tau$ on $P$
is a
cyclic permutation of the states of $P$---that is,  $\delta( \cdot, \tau)$ maps 
$(p_1, \dots, p_{|P|})$ to $(p_2, \dots, p_{|P|}, p_1)$ when the states of $P$ are appropriately labeled as $p_1, \dots, p_{|P|}$.
Note that $M$ is aperiodic if and only if no word $\tau$ exists for which $\tau$ cyclically permutes some non-singleton set $P \subseteq Q$.

Given a digraph $G$ and two vertices $p,q \in V(G)$,
we say that a walk $W$ in $G$ from $p$ to $q$ is a \emph{geodesic path} if the number of edges traversed by $W$ is equal to $\dist(p,q)$.
We say that $M$ is \emph{geodesically aperiodic}
if the following holds: For every set
$P = \{p_1, \dots, p_{|P|}\}$ of at least two states,
if the action of a word 
$\tau $ of length $\ell$ maps $(p_1, \dots, p_{|P|})$ to $(p_2, \dots, p_{|P|}, p_1)$, then there exists $j \in \{1, \dots, |P|\}$ for which $\dist(p_j,p_{j+1}) < \ell$, with $1$ and $|P| + 1$ identified.
In other words, if the action of $\tau$ maps 
$(p_1, \dots, p_{|P|})$ to $(p_2, \dots, p_{|P|}, p_1)$,
then for some $j \in \{1, \dots, |P|\}$,
the walk in the digraph of $M$ from $p_j$ to $p_{j+1}$ given by $\tau$ is not a geodesic path.

\begin{theorem}
    If $M = (Q,\Sigma,\delta)$ is a difference DFA, then $M$ is geodesically aperiodic.
\end{theorem}
\begin{proof}
    Suppose for the sake of contradiction that
    $P = \{\mb p_1, \dots, \mb p_k\}$ is a subset of $Q$
    for which some word $\tau$
    gives a geodesic path in the digraph of $M$ from each state $\mb p_j$ to $\mb p_{j+1}$, where $k+1 = 1$.
    Then, writing
    $\tau = \mb c_1 \cdots \mb c_m$,
    it holds in particular that 
    $\mb p( \mb c_1 \cdots \mb c_{t-1}) \neq \mb p( \mb c_1 \cdots \mb c_t )$
    for each $\mb p \in P$ and $t \in \{1, \dots, m\}$, where $\mb c_1 \cdots \mb c_0$ is the empty word $\epsilon$.
    Then, since $M$ is a difference DFA, it follows that
    $\mb p (\tau) = \mb p + \mb c_1 + \dots + \mb c_m$ for each $\mb p \in P$.
    Furthermore, as the permutation given by the action of $\tau$ on $P$ is nontrivial, $m \geq 1$.
    Thus, the action of $\delta( \cdot, \tau)$ on $P$ is a nonzero shift in real space, and not a permutation, a contradiction.
\end{proof}

Trahtman \cite{trahtman2007v} showed that every aperiodic DFA with a universally reachable state has a synchronizing word. Using similar ideas, we can prove the same result for geodesically aperiodic DFAs, and hence for difference DFAs in particular.

\begin{theorem}
\label{thm:loopless-sync}
    If $M = (Q,\Sigma,\delta)$ is a geodesically aperiodic DFA,
    then $M$ has a synchronizing word if and only if $M$ has a universally reachable state.
\end{theorem}
\begin{proof}
    If $M$ has a synchronizing word $\tau$, then there exists $z \in Q$ such that $q(\tau) = z$ for every $q \in Q$. Therefore, $z$ is universally reachable.

    On the other hand, suppose that $M$ has a state $z$ that is universally reachable.
    We claim that 
    for each $q \in Q$,
    there exists a word $\tau$ 
    for which
    $q(\tau) = z(\tau)$.
    As $z$ is universally reachable,
    this claim clearly implies that $M$ has a synchronizing word.
    To prove the claim,
    suppose that the claim is false, and let $q \in Q$ be a
    state
    for which the claim does not hold. Let $d$ be the minimum integer for which some word $\tau$ 
    satisfies $\dist(q(\tau), z(\tau)) = d$.
    As no word's action maps $q$ and $z$ to the same state, it follows that $d \geq 1$.

    Now,
    let $\pi = c_1 \cdots c_m$ be a word corresponding to a
    shortest path from $q$ to $z$. 
    As $q(\pi) = z$,
    $z(\pi^k) = q(\pi^{k+1})$ for each $k \geq 0$.
     As $Q$ is finite,
    there exist two distinct integers $k < \ell$ for which $q(\pi^k) = q(\pi^{\ell})$. 
    If $\ell = k+1$, then $q(\pi^k) = q(\pi^{k+1}) = z(\pi^k)$,
    a contradiction. Therefore, $\ell - k \geq 2$.
    Letting $P$ be the ordered vertex set $(q(\pi^k), q(\pi^{k+1}), \dots, q(\pi^{\ell-1}))$, we see that the image
    of $P$ under the action of $\pi$ is 
    $(q(\pi^{k+1}),q( \pi^{k+2} ) , \dots, q(\pi^{\ell-1}) ,q(\pi^k))$, so that the action of $\pi$ gives a cyclic permutation of $P$.
    Therefore, as $M$ is geodesically aperiodic,
    for some $i \in \{k, \dots, \ell-1\}$,
    $\dist(q(\pi^i), z(\pi^i)) = \dist(q(\pi^i), q(\pi^{i+1})) < d$,
    a contradiction.
\end{proof}

While geodesically aperiodic DFAs
with a universally reachable state
have a synchronizing word,  Trahtman's proof  of \v{C}ern\'y's conjecture for aperiodic DFAs unfortunately does not extend to geodesically aperiodic DFAs.
In particular,
Trahtman shows that in an aperiodic DFA $M$, a certain partial order can be defined on the states of $M$ that can be exploited when sychronizing two states.
However, this partial order does not clearly hold for geodesically aperiodic DFAs, leaving the question of whether \v{C}ern\'y's conjecture holds for geodesically aperiodic DFAs open.

By analyzing the proof of Theorem \ref{thm:loopless-sync},
we can obtain an upper bound on the minimum length of a synchronizing word in a geodesically aperiodic DFA $M$ with $n$ states and with a universally reachable state.
For convenience, we perform our analysis in the digraph setting.
We let $G$ be the digraph representing $M$, and we
imagine that we wish to guide $n$ robots occupying the $n$ vertices of $G$ to a common vertex of $G$ using commands in $\Sigma^*$.
Our argument shows that if $R$ and $R_0$ are two 
robots, and if $R_0$ occupies a universally reachable state, then repeated applications of the same word $\pi$ can reduce the distance from $R$ to $R_0$, 
and this process can be repeated until $R$ and $R_0$
occupy the same vertex.
Letting $d$ be the diameter of the digraph of $M$,
we have $|\pi| \leq d$,
and $\pi$ needs to be applied at most $n$ times to reduce 
 the $(R,R_0)$-distance.
Therefore, a word of length at most $d^2 n$ can guide $R$ and $R_0$ to a common universally reachable state. 
By repeating this process for $n-1$ robot pairs, we find a synchronizing word for $M$ of length at most $d^2 n(n-1)$.
While this analysis fails to show that difference DFAs satisfy \v{C}ern\'y's conjecture,
we will see that
the cornering strategy
(Theorem \ref{Thm: Cornering Strategy})
shows
that bidirectional difference DFAs with reasonable diameter constraints do satisfy \v{C}ern\'y's conjecture.

\section{The cornering strategy}
\label{sec:corner}
In this section, we introduce the cornering strategy and then apply it to bidirectional difference DFAs.
First, we establish some definitions related to the cornering strategy.
Given a DFA $M = (Q,\Sigma,\delta)$
and a 
function $f:\Sigma^* \rightarrow \{-1,1\}$,
we say that a word $\tau \in \Sigma^*$ is \textit{$f$-positive} if $f(\tau)=1$. 
Letting  $G = (V,E,\psi)$ be the digraph representation of $M$, we extend the definition of $f$ so that $f(W) := f(\psi(W))$ for each directed walk $W$ in $G$. 
Then,
we say that a walk $W$ in $G$ is \textit{$f$-positive} if $f(W)=1$.
When the choice of $f$ is clear from context,
we simply say that
$W$ is positive. 
In this paper, we always define $f$ so that empty word is $f$-positive, i.e.~$f(\epsilon) = 1$.

We write $\dist^+_f(p,q)$ for the length of a shortest $f$-positive walk from $p$ to $q$ in $G$. 
As $f(\epsilon) = 1$, we observe that $\dist^+_f(q,q) = 0$.
Furthermore, if there is no $f$-positive walk from $p$ to $q$, then we write $\dist^+_f(p,q) = \infty$.

Next, we say that $z \in Q$ is an \textit{$f$-corner} in $M$ (or in $G$) if $\dist^+_f(q,z) < \dist^+_f(z,q)$
for all $q  \in Q \setminus \{z\}$. When $f$ is clear from context,
we say that $z$ is a \emph{corner}. For some examples of corners,
see Figure~\ref{fig:corner-example}. In particular,
notice that $z$ is an $f$-corner,
because for all vertices there is a shortest path to $z$ which is $f$-positive, whereas no path leaving $z$ is
$f$-positive. 
Furthermore, while not every $q \in Q\setminus \{z'\}$ has a shortest path to $z'$ that is $f$-positive,
one can still verify that 
 $\dist^+_g(x,z') < \dist^+_g(z',x)$ for all $q \in Q \setminus \{z'\}$.
 Therefore, $z'$ is a $g$-corner.
Also notice that $z'$ is not an $f$-corner,
and $z$ is not a $g$-corner.

\vspace{0.5cm}
\begin{figure}[!ht]\label{Fig:Non-triv corner}
\begin{center}
    \scalebox{0.85}{
        \begin{tikzpicture}[node distance={15mm}, thick, main/.style = {draw, circle}] 

\node[main][fill= black] (0) at (0,4) {};
    \node[fill=none] at (-1,4) (nodes) {$z'$};
\node[main][fill= black] (1) at (1,0) {}; 
\node[main][fill= black] (2) at (5,0) {}; 
\node[main][fill= black] (3) at (6,4) {}; 
\node[main][fill= black] (4) at (3,6) {}; 
    \node[fill=none] at (3,7) (nodes) {$z$};
 
\node[main][fill= black] (5) at (2,3) {}; 
\node[main][fill= black] (6) at (2.5,1.5) {}; 
\node[main][fill= black] (7) at (3.5,1.5) {};
\node[main][fill= black] (8) at (4,3) {}; 
\node[main][fill= black] (9) at (3,4) {}; 
 
\draw [->,cyan, ultra thick, densely dotted] (4) -- (0);
\draw [->,ultra thick, dashed, red] (0) to [out = 90, in = 180] (4);
\draw [->,green!50!black, dotted] (4) -- (3);
\draw [->,blue] (3) to [out = 90, in = 0] (4);
\draw [->,ultra thick, dashed, red] (9) -- (4);

\draw [->,green!50!black, dotted] (7) -- (5);
\draw [->,blue] (6) -- (9);
\draw [->,ultra thick, dashed, red] (7) -- (9);
\draw [->,blue] (5) -- (0);
\draw [->,blue] (8) -- (3);

\draw [->,ultra thick, dashed, red] (1) -- (0);
\draw [->,green!50!black, dotted] (1) -- (2);
\draw [->,cyan, ultra thick, densely dotted] (1) -- (6);

\draw [->,green!50!black, dotted] (2) to [out = -135, in = -45] (1);
\draw [->,blue] (2) -- (3);
\draw [->,cyan, ultra thick, densely dotted] (2) -- (7);

\draw [->,ultra thick, dashed, red] (5) -- (8);
\draw [->,cyan, ultra thick, densely dotted] (8) -- (6);

\draw [->,cyan, ultra thick, densely dotted] (3) to [out = -45, in = 45] (2);
\draw [->,cyan, ultra thick, densely dotted] (9) to [out = 45, in = -45] (4);

\pgfmathsetmacro{\a}{2}
\node[main][fill= white,draw=white] (z1) at (6+\a,3) {};
\node[main][fill= white,draw=white] (z2) at (6+\a,1) {};
\node[main][fill= white,draw=white] (y1) at (8+\a,3) {};
\node[main][fill= white,draw=white] (y2) at (8+\a,1) {};
\node[main][fill= white,draw=white] (x1) at (10+\a,3) {};
\node[main][fill= white,draw=white] (x2) at (10+\a,1) {};

\node[main][fill= white,draw=white] (z) at (6.5+\a,2) {\texttt{a}};
\node[main][fill= white,draw=white] (z) at (8.5+\a,2) {\texttt{b}};
\node[main][fill= white,draw=white] (z) at (10.5+\a,2) {\texttt{c}};

\draw [->,blue] (z1) -- (z2);
\draw [->,ultra thick, dashed, red] (y1) -- (y2);
\draw [->,green!50!black, dotted] (x1) -- (x2);

        \end{tikzpicture}
    }
\end{center}
\caption{A strongly connected DFA with an $f$-corner $z$ and a $g$-corner $z'$. Here, 
$f:\Sigma^* \rightarrow \{-1,1\}$
is a function in which a nonempty word is positive if and only if it begins with \texttt{a} or \texttt{b}, and
$g:\Sigma^* \rightarrow \{-1,1\}$ is a 
function in which a nonempty word is positive if and only if it begins with \texttt{a} or \texttt{c}.}
\label{fig:corner-example}
\end{figure}

\subsection{The general strategy}

This subsection describes
the cornering strategy, 
which gives an explicit method for constructing a synchronizing word in a DFA with an $f$-corner.
The cornering strategy  is
given in the proof of Theorem~\ref{Thm: Cornering Strategy}. Thus,
Theorem~\ref{Thm: Cornering Strategy} can be viewed as a summary of the necessary conditions for applying
the cornering strategy, as well as an upper bound on the length of the resulting synchronizing word. In Theorem~\ref{Thm: Cornering is necessary}, we prove that a DFA is $z$-synchronizable if and only if there exists a function $f:\Sigma^* \rightarrow \{-1,1\}$ with $f(\epsilon)$ such that $z$ is an $f$-corner.

\begin{theorem}\label{Thm: Cornering Strategy}
    Let 
    $M = (Q,\Sigma,\delta)$ be a DFA of order $n$,
    and let $f:\Sigma^* \rightarrow \{-1,1\}$ satisfy $f(\epsilon) = 1$. 
    If $z \in Q$ is an $f$-corner of $M$, then $M$ has a $z$-synchronizing word of length at most $\frac 12 d(d+1)(n-1)$, where $d = \max_{q \in Q} \dist^+_f(q,z)$.
\end{theorem}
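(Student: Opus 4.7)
The plan is to exhibit a concrete iterative construction—the \emph{cornering strategy}—and argue it yields an $x$-synchronizing word of the claimed length. Starting from $S_0 = V$, at each step, while $S_t \neq \{x\}$, I pick a vertex $u_t \in S_t \setminus \{x\}$ realizing $\ell_t := \max_{v \in S_t \setminus \{x\}} \dist^+_f(v, x)$, take the label $\sigma_t$ of a shortest $f$-positive walk from $u_t$ to $x$ (so that $|\sigma_t| = \ell_t \leq d$), and update $S_{t+1} := S_t(\sigma_t)$. Because $u_t$ is sent to $x$, we always have $x \in S_{t+1}$, and the construction terminates when $S_t = \{x\}$.

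To bound the length of the word $\sigma_0 \sigma_1 \cdots$, I group the steps into \emph{phases}, where a phase is a maximal consecutive block of steps during which $|S_t|$ stays constant. Since $|S_t|$ strictly decreases across phase boundaries and $|S_0| = n$, there are at most $n - 1$ phases.

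The heart of the argument is the claim that within any single phase, the sequence $\ell_t$ is strictly decreasing. Granting this, each phase has at most $d$ steps whose $\ell$-values lie in $\{1, 2, \ldots, d\}$ with no repeats, so a phase contributes at most $d + (d-1) + \cdots + 1 = d(d+1)/2$ letters to the synchronizing word. Summing across at most $n - 1$ phases yields the stated bound $(n-1) d(d+1)/2$.

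The key claim is proved via the corner inequality. If steps $t$ and $t+1$ lie in the same phase, then $|S_{t+1}| = |S_t|$, meaning $\sigma_t$ acts injectively on $S_t$; in particular, $u_t$ is the unique preimage of $x$ in $S_t$, and every $v \in S_t \setminus \{u_t\}$ has image $w = v(\sigma_t)$ distinct from $x$. For any such $w$, the corner inequality gives $\dist^+_f(w, x) < \dist^+_f(x, w)$; combining this with the observation that the walk from $x$ labeled $\sigma_t$ is $f$-positive of length $\ell_t$, one deduces $\dist^+_f(x, w) \leq \ell_t$ and hence $\dist^+_f(w, x) \leq \ell_t - 1$, so $\ell_{t+1} < \ell_t$. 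The main obstacle is verifying the upper bound $\dist^+_f(x, w) \leq \ell_t$ for \emph{every} non-$x$ image $w$; this step relies delicately on the maximality of $\ell_t$, the injectivity of $\sigma_t$ on $S_t$, and potentially the latitude to choose $u_t$ among multiple farthest vertices and $\sigma_t$ among multiple shortest walks.
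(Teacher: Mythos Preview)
Your phase structure and the target $(n-1)d(d+1)/2$ match the paper's accounting, but the within-phase claim that $\ell_{t+1}<\ell_t$ is not merely unproved---it fails as stated. Consider $t=0$: here $S_0=V$, and if step $0$ and step $1$ lie in the same phase then $\sigma_0$ acts bijectively on $V$, forcing $S_1=V$ and hence $\ell_1=d=\ell_0$. More generally, the corner inequality only controls the single image $x(\sigma_t)$: the walk from $x$ labelled $\sigma_t$ is $f$-positive of length $\ell_t$, giving $\dist^+_f(x,x(\sigma_t))\le\ell_t$ and thus $\dist^+_f(x(\sigma_t),x)\le\ell_t-1$. But for $v\in S_t\setminus\{x,u_t\}$ and $w=v(\sigma_t)$, the walk labelled $\sigma_t$ starts at $v$, not at $x$, so the corner inequality is inapplicable to $w$; the bound $\dist^+_f(v,x)\le\ell_t$ from maximality says nothing about $\dist^+_f(w,x)$; and since $f$ is an arbitrary function on $\Psi^*$, $f$-positive words need not compose, ruling out any triangle-type estimate. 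Freedom in choosing $u_t$ or $\sigma_t$ does not help, because the difficulty lies with the \emph{other} vertices $v\neq x,u_t$, over whose images you have no control.

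The paper avoids this by synchronizing \emph{two} robots at a time rather than all of $S_t$. Given $u\in V$, place $r_1$ at $u$ and $r_2$ at $x$; in iteration $i$, send $r_1$ (if $i$ is odd) or $r_2$ (if $i$ is even) to $x$ along a shortest $f$-positive walk. After iteration $i$ one robot sits at $x$ and the other is precisely the image of $x$ under the $f$-positive word just applied (of length at most $d-i+1$), so the corner inequality \emph{does} apply and bounds its return distance by $d-i$. After $d$ iterations both occupy $x$, and the resulting word $\tau_u$ has length at most $d+(d-1)+\cdots+1=d(d+1)/2$. Concatenating at most $n-1$ such words yields the bound. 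To repair your argument, each phase should synchronize one designated robot with the robot at $x$ via this alternation, rather than repeatedly targeting the globally farthest element of $S_t$.
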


\begin{proof}
    Let $G = (V,E,\psi)$ be the digraph of $M$, and  identify $V(G)$ with $Q$.
    Suppose that $z$ is an $f$-corner in $M$ and that $d = \max_{q \in Q} \dist^+_f(q,z)$. For each $q \in Q$,
    we let $M_q = (Q,\Sigma,\delta,q,\{z\})$ be a copy of $M$ with an initial active state $q$ and with a single accept state $z$.
    Let $M' = \prod_{q \in Q} M_q$ be the product DFA of all automata $M_q$.
    We aim to construct a word $\pi \in \Sigma^*$
    of length at most $\frac 12 d(d+1)(n-1)$
    that 
    is accepted by $M'$.
    In other words, 
    we aim for
    the action of $\pi$ to map the initial state of each $M_q$ to the accept state $z$.
    One may visualize each automaton $M_q$ as a robot that initially occupies the vertex $q \in V(G)$ and that moves throughout $G$ according to command words from $\Sigma^*$ that it receives. In this setting, we aim to let $\pi$ guide every robot to $z$.

    First, given a fixed vertex $q \in Q$,
    we describe a procedure for constructing a word $\pi_q$
    for which $z(\pi_q) = q(\pi_q) = z$.
    In order to construct the word $\pi_q$, 
    we consider the automata $M_q$ and $M_z$.
   Our goal is to construct a word $\pi_q$
   that maps the initial states of both $M_q$ and $M_z$ to the accept state $z$, so that $\pi_q$ is accepted by both $M_q$ and $M_z$.
    We construct $\pi_q$ over many iterations, as follows. Throughout our iterations, we imagine that words are passed to $M_q$ and $M_z$ which update their active states.

    Iterating through $i = 1, \dots, d$ in order, we define a word $\tau_i$ that we pass to $M_q$ and $M_z$ as follows. If $i$ is odd, then we let $W_i$ be a shortest $f$-positive walk in $G$ from the current active state of $M_q$ to the vertex $z$. If $i$ is even, then we let $W_i$ be a shortest $f$-positive walk in $G$ from the
    active state of $M_z$ to $z$.
    In both cases, we let $\tau_i = \psi(W_i)$. Notice that if $M_q$ and $M_z$ both  have active state  $z$ after iteration $i$, then for all $j>i$, $W_j$ is an empty walk, implying that $\tau_j$ is the empty word $\epsilon$.
    After iterating through $i = 1, \dots, d$ we stop.
    For technical reasons, we also imagine that $i=0$ before we define $\tau_1$, and that we have an iteration corresponding to $i=0$ in which the empty word $\tau_0 = \epsilon$ 
    is passed to $M_q$ and $M_z$. Finally, we write $\pi_q = \tau_0 \tau_1 \dots \tau_d$, so that $\pi_q$ is the entire word that is passed to $M_q$ and $M_z$ over all iterations.

    We claim that 
    both $M_q$ and $M_z$ accept $\pi_q$; in other words,
    after receiving $\pi_q$,
    both $M_q$ and $M_z$ have active state
    $z$. To prove this claim, 
    we show by induction that after iteration $i$, the following holds:
    \begin{itemize}
        \item If $i$ is even, then after $\tau_0\cdots\tau_i$ is passed,
        the shortest $f$-positive walk from the active state of $M_q$ to the vertex $z$ has a
        length of at most $d-i$, and $M_z$ has active state $z$.
        \item If $i$ is odd, then after $\tau_0\cdots\tau_i$ is passed,
        $M_q$ has active state $z$, and a shortest $f$-positive walk from the active state of $M_z$ to  $z$ has a length of at most $d-i$.
    \end{itemize}
    When $i = d$, the statement implies that both $M_q$ and $M_z$ have active state $z$.
    
    To prove the statement, consider a value $0 \leq i \leq d$. 
    For our base case, when $i = 0$, we observe that $M_z$ has active state $z$, and $M_q$ has active state $q$. Then, our theorem's hypothesis implies that $\dist_f^+(q,z) \leq d$, completing the base case. Next, suppose that $i \geq 1$ and that $i$ is odd. By our induction hypothesis, at the beginning of iteration $i$, $M_z$ has active state $z$, and a shortest $f$-positive walk from the active state of
    $M_q$ to $z$ has length at most $d-(i-1)=d-i+1$. Hence, the word $\tau_i$ has length $|W_i| \leq d-i + 1$. 
    Finally, we write $q'$ for the active state of $M_z$ after applying the word $\tau_i$, and we observe that $\dist^+_f(z,q') \leq |W_i| \leq d-i + 1$. Since $z$ is an $f$-corner, $\dist_f^+(q',z) < \dist_f^+(z,q') \leq d-i+1$. Hence, $\dist_f^+(q',z) \leq d-i$. Furthermore, by definition of $\tau_i$, the active state of $M_q$ moves to $z$ after applying the word $\tau_i$. Therefore, 
    at the end of iteration $i$, $M_q$ has active state $z$, and a shortest $f$-positive walk from the active state of $M_z$ to $z$ has length at most $d-i$, as claimed. When $i$ is even, the proof is similar, with $M_q$ and $M_z$ switched. This completes induction.

    We observe that the length of the word $\pi_q = \tau_0 \tau_1 \cdots \tau_d$ passed to $M_q$ and $M_z$ throughout all iterations is at most $d + (d-1) + \dots + 1 = \frac 12 d(d+1)$.
    
    Now, we construct an $z$-synchronizing word $\pi$ in $G$. 
    Recall that for each $q \in Q$, $M_q$ is a copy of $G$ with an initial active state $q$.
    We say that $M_q$ is \emph{synchronized} if $M_q$ and $M_z$ both have active state $z$. Initially, exactly $n-1$ automata $M_q$ are not synchronized.

    We construct $\pi$ as follows. We initialize $\pi = \epsilon$. Then, while there exists some automaton $M_q$ with an active state $p \in Q \setminus \{z\}$, we define a word
    $\pi_p$ of the length at most $\frac{d(d+1)}{2}$ 
    that moves the active states of both $M_q$ and $M_z$ to $z$,
    as described above.
    We pass $\pi_p$ to all automata and update $\pi \leftarrow \pi \pi_p$. We observe that after passing $\pi_p$, the number of synchronized automata $M_q$
    decreases by at least $1$. Since $n-1$ automata are initially not synchronized,
    we can construct an $z$-synchronizing word $\pi$ as a concatenation of at most $n-1$ words $\pi_q$, each of length at most $\frac 12 d(d+1)$. Therefore, we find an $z$-synchronizing word of length at most $\frac 12 d(d+1)(n-1)$, completing the proof.
\end{proof}

We note that the proof of Theorem \ref{Thm: Cornering Strategy} can be modified to produce shorter words of small \emph{rank}, defined as follows. A word $\tau \in \Sigma^*$ has rank $k$ in $M = (Q,\Sigma,\delta)$  if $|\{q(\tau): q \in Q\}|\leq k$. By following the proof of Theorem \ref{Thm: Cornering Strategy}, one can show that for each $k \geq 1$, under the same assumptions, $M$ 
has a word of rank at most $k$ and of length at most $\frac 12 d(d+1)(n-k)$.
In the next subsection, we use Theorem~\ref{Thm: Cornering Strategy} to prove
that all bidirectionally connected difference DFAs in $\mathbb{R}^d$ are synchronizable, 
and that they furthermore often have short synchronizing words.

We have shown that the existence of an $f$-corner is a sufficient condition for a DFA to be synchronizable. 
Now, we show that 
the existence of an $f$-corner is also a necessary condition for a DFA to be synchronizable.

\begin{theorem}\label{Thm: Cornering is necessary}
    Let $M = (Q,\Sigma,\delta)$ be a DFA. For
    each 
    $z \in Q$, $M$ is $z$-synchronizable if and only if there exists a
    function
    $f: \Sigma^* \rightarrow \{-1,1\}$
    satisfying $f(\epsilon) = 1$
    such that $z$ is an $f$-corner.
\end{theorem}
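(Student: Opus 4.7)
The plan is to prove the two directions of the biconditional separately. The reverse direction (that an $f$-corner implies $x$-synchronizability) should follow at once from Theorem~\ref{Thm: Cornering Strategy}: if $x$ is an $f$-corner, the strict inequality $\dist^+_f(u,x) < \dist^+_f(x,u)$ forces $d := \max_{u \in V} \dist^+_f(u,x)$ to be finite, since otherwise we would have $\infty < \infty$. With $d$ finite, Theorem~\ref{Thm: Cornering Strategy} produces an $x$-synchronizing word of length at most $\tfrac{(n-1)d(d+1)}{2}$, so $G$ is $x$-synchronizable.

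For the forward direction I would construct $f$ directly in terms of $x$-synchronization: define $f(w) = 1$ if $w$ is an $x$-synchronizing word of $G$, and $f(w) = -1$ otherwise. By hypothesis, at least one such positive word $\sigma$ exists, so this definition has content.

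With this choice of $f$, checking the corner condition at $x$ reduces to two short observations. First, for any $u \in V \setminus \{x\}$, reading $\sigma$ from $u$ traces an $f$-positive walk of length $|\sigma|$ ending at $u(\sigma) = x$, so $\dist^+_f(u,x) \leq |\sigma| < \infty$. Second, suppose for contradiction that $W$ is an $f$-positive walk from $x$ to some $u \neq x$; then $\psi(W)$ is $x$-synchronizing by definition of $f$, so $x(\psi(W)) = x$, contradicting the fact that $W$ ends at $u$. Hence $\dist^+_f(x,u) = \infty$ whenever $u \neq x$, and the corner inequality holds automatically.

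I anticipate no real obstacle. The only step requiring any insight is choosing $f$ to be the indicator function of the set of $x$-synchronizing words, after which both halves of the corner condition reduce to one-line checks from the definitions of $x(w)$ and $x$-synchronization. It is worth noting that this argument gives no useful length bound on $\sigma$, which is appropriate: Theorem~\ref{Thm: Cornering is necessary} is a qualitative characterization, and the quantitative content lives in Theorem~\ref{Thm: Cornering Strategy}, where a \emph{good} choice of $f$ yields short synchronizing words.
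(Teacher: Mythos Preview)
Your proposal is correct and follows essentially the same route as the paper: the reverse direction is dispatched by Theorem~\ref{Thm: Cornering Strategy}, and for the forward direction the paper also defines $f(\sigma)=1$ exactly when $\sigma$ is an $x$-synchronizing word, then observes $\dist^+_f(u,x)<\infty$ and $\dist^+_f(x,u)=\infty$ for $u\neq x$. Your explicit check that $d$ is finite before invoking Theorem~\ref{Thm: Cornering Strategy} is a small clarifying addition the paper leaves implicit.
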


\begin{proof}
    Let $M = (Q,\Sigma,\delta)$ be a DFA and $z \in Q$ a state. If there exists a function $f: \Sigma^* \rightarrow \{-1,1\}$
    satisfying $f(\epsilon) = 1$
    such that $z$ is an $f$-corner, then Theorem~\ref{Thm: Cornering Strategy} implies that $M$ is $z$-synchronizable.
    
    Suppose, on the other hand, 
    that $M$ is $z$-synchronizable.
    Let $\pi$ be a $z$-synchronizing word. Let $f: \Sigma^* \rightarrow \{-1,1\}$ be the function defined so that $f(\tau)=1$ if and only if $\tau = \epsilon$ or $\tau$ is a $z$-synchronizing word. Then, $f(\pi) = 1$. As $\pi$ is $z$-synchronizing, $q(\pi) = z$ for all $q \in Q$; 
    thus, $\dist_f^+(q,z)$ is finite. On the other hand, for each $z$-synchronizing word 
    $\tau$ and each
    $q \in Q\setminus \{z\}$, $z(\tau)=z \neq q$, implying that $\dist_f^+(z,q) = \infty$. Thus, $\dist_f^+(q,z) < \dist_f^+(z,q)$ for all $q\neq z$, implying $z$ is an $f$-corner. This completes the proof.
\end{proof}

\subsection{The cornering strategy for bidirectional connected difference DFAs}
In this section, we prove
that each bidirectional connected difference DFA in $\mathbb{R}^d$ has a sychronizing word of length at most $O(nd^2)$, where $d$
is
the diameter of the DFA (Theorem~\ref{Thm: R^d automata}).
Furthermore,
we prove  in  Theorem~\ref{Thm: R^d Cerny}  that if 
$G$ is the digraph of a bidirectional connected difference DFA of order $n$,
and if each vertex of $G$
has $\Omega(\sqrt{n})$ out-neighbours, then the synchronizing word which results from our construction has the length $O(n^2)$. In doing so,
we show that if every vertex of 
$G$ in $\mathbb{R}^d$ has at least $3\sqrt{\frac{n}{2}}-1$ out-neighbours, then $G$ satisfies \v{C}ern\'y's conjecture.

We begin studying 
bidirectional connected
difference DFAs by
establishing several properties about finite sets of points in $\mathbb{R}^d$.  
Our goal
is to show that for every
bidirectional connected difference DFA $G$ with vertices in $\mathbb{R}^d$, 
there exists 
a well-behaved
function $f:\Sigma^* \rightarrow \{-1,1\}$
for which $G$ has
an $f$-corner.

Given a set $S \subseteq \R^d$, we write $\conv S$ for the convex hull of $S$.
Given 
a convex set $C \subseteq \R^d$,
a point $\mb z \in C$ is an \emph{exposed point} of $C$ if 
there exists a linear functional $C \rightarrow \R$ 
that attains a unique maximum at $\mb z$ \cite[Chapter 8]{Simon}; in other words, there exists $\mb a \in \R^d$ such that $\mb a \dotp \mb x < \mb a \dotp \mb z$ for all $\mb x \in C \setminus \{\mb z\}$.
%
Gr\"unbaum \cite[Chapter 3]{Grunbaum} points out that if $C$
is the convex hull of a finite set $S \subseteq \R^d$, then the exposed points of $C$ are exactly
the
\emph{extreme points} of $C$---that is, the points that cannot be expressed as a proper convex combination of two other points of $C$.
This observation implies that 
if $S \subseteq \R^d$ is a nonempty finite set, then any $\mb z \in S$ for which $\conv (S \setminus \{\mb z\}) \subsetneq \conv S$ is an exposed point in $\conv S$.

\begin{theorem}\label{Thm: R^d automata}
    If $M = (Q,\Sigma,\delta)$ is a bidirectional connected difference DFA of order $n$
    whose digraph has
    diameter $d$,
    then $M$
    has a synchronizing word of length at most 
    $\frac 12 d(d+1)(n-1)$.
\end{theorem}

\begin{proof}
Let $M = (Q,\Sigma,\delta)$
be a directional connected difference DFA of order $n$ with digraph $G= (V,E,\psi)$.
Assume that $Q \subseteq \mathbb R^d$, and identify $V(G)$ and $Q$ for convenience.
By 
our previous discussion,
there exists a point $\mb z \in Q$ and a non-zero vector $\mb a$ such that 
$$
\mb a\dotp (\mb z - \mb x ) > 0
$$
for all $\mb x \in Q \setminus \{\mb z\}$.

We observe that if $P = (\mb p , \dots , \mb q)$ is a path (or more generally a loopless walk)
in $G$, then as $M$ is a difference DFA, the vector sum of the symbols in $\psi(P)$
is
$\mb q - \mb p$.  We let $T$  
be the set of all words $\tau$ for which there exists a directed path $P = (\mb p, \dots, \mb q)$ in $G$ such that $\psi(P) = \tau$ and $\mb a \dotp (\mb q - \mb p) \geq 0$. Then, we let 
$f:\Sigma^* \rightarrow \{-1,1\}$ 
satisfy
$f(\tau) = 1$ if $\tau \in T$ and $f(\tau) = -1$ otherwise. 
We observe that for every directed path $P = (\mb p, \dots, \mb q)$ in $G$,
$f(\psi(P)) = 1 $ if and only if $\mb a \dotp (\mb q - \mb p) \geq 0$.
In particular, $f(\epsilon) = 1$.
Furthermore,
every directed path $P = (\mb p, \dots, \mb z)$ is $f$-positive. 

By Theorem~\ref{Thm: Cornering Strategy} it is sufficient to show there is an $f$-corner in $G$. We claim $\mb z$ is an $f$-corner. For each $\mb q \in Q \setminus \{\mb z\}$, $\dist^+_f(\mb q, \mb z) = \dist(\mb q,\mb z)$,
as every path from $\mb q$ to $\mb z$ is $f$-positive.
Furthermore,
$\dist^+_f(\mb z,\mb q) > \dist(\mb z,\mb q)$ as no
path
from $\mb z$ to $\mb q$ is $f$-positive. 
As $G$ is bidirectional connected, $\dist(\mb q,\mb z)=\dist(\mb z,\mb q)$, so we conclude that 
$$
\dist^+_f(\mb q,\mb z) = \dist(\mb q,\mb z) = \dist(\mb z,\mb q) <\dist^+_f(\mb z,\mb q)
$$ 
so $\dist^+_f(\mb q,\mb z) <\dist^+_f(\mb z,\mb q)$ as desired. 
As
 $\mb q \in Q \setminus \{\mb z\}$ was chosen arbitrarily, we conclude that $\mb z$ is an $f$-corner. This completes the proof.
\end{proof}

For an example of applying Theorem~\ref{Thm: R^d automata} to a fixed DFA, consider the difference DFA $M$ in Figure~\ref{Fig: Diff not aperiodic}. Let ${\mb z} = (0,0)$ represent the bottommost and leftmost state,
and let ${\mb a} = (-1,-1)$. Then, ${\mb a} \cdot {\mb z}  = 0$,
but for any ${\mb q} = (x,y)$ where $x \geq 0$ and $y \geq 0$, ${\mb a} \cdot {\mb q}  \geq 0$ if and only if ${\mb q } = {\mb z}$. 
Hence, for each state $\mb q \in Q \setminus \{\mb z\}$, $\mb a \dotp \mb z > \mb a \dotp \mb q$.

With this example in mind,
we observe that for a general difference DFA $M = (Q,\Sigma,\delta)$, 
whenever ${\mb z}$ is 
an exposed point of $\conv Q$,
we can define $f$ so that $\mb z$ is an $f$-corner of $M$. 
In this case, the convex hull of $Q$ is a polytope, and $\mb z$ is a vertex (or \emph{corner}) of this polytope.
This idea motivates
our name for the cornering strategy.

From Theorem \ref{Thm: R^d automata},
we also see that a 
bidirectional connected
difference DFA whose minimum out-degree
is large must have a short synchronizing word.
\begin{lemma}[{\cite[Theorem.1]{erdHos1989radius}}]\label{Lemma: diameter}
    If $G$ is a simple graph of order $n$ and minimum degree $\delta \geq 2$, then 
    $$
    \diam(G) \leq \frac{3n}{\delta+1}-1.
    $$
\end{lemma}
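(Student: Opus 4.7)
The plan is to use the classical trick of extracting from a diametral path a set of vertices whose closed neighborhoods are pairwise disjoint. Let $D = \diam(G)$ and fix a shortest path $P = v_0, v_1, \ldots, v_D$ realizing the diameter. The crucial observation, which I would verify first, is that if $i, j \in \{0, 1, \ldots, D\}$ satisfy $|i - j| \geq 3$, then the closed neighborhoods $N[v_i]$ and $N[v_j]$ are disjoint; otherwise a common vertex would yield a walk of length at most $2$ between $v_i$ and $v_j$, contradicting that $P$ is a shortest path.

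Next I would select the vertices $v_0, v_3, v_6, \ldots, v_{3k}$ where $k = \lfloor D/3 \rfloor$, giving $k+1$ vertices along $P$ with pairwise disjoint closed neighborhoods. Since $\delta \geq 2$, each $|N[v_{3i}]| \geq \delta + 1$, and so summing gives
\[
(k+1)(\delta + 1) \;\leq\; n.
\]
Rearranging yields $k+1 \leq n/(\delta+1)$, i.e., $k \leq \frac{n}{\delta+1} - 1$.

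Finally, since $k = \lfloor D/3 \rfloor$, we have $D \leq 3k + 2$, and substituting the bound on $k$ gives
\[
D \;\leq\; 3k + 2 \;\leq\; 3\!\left(\frac{n}{\delta+1} - 1\right) + 2 \;=\; \frac{3n}{\delta+1} - 1,
\]
which is exactly the desired inequality. There is no real obstacle here; the only subtle point worth stating carefully is the disjointness of the closed neighborhoods along $P$, and the slight bookkeeping to convert $\lfloor D/3 \rfloor \leq n/(\delta+1) - 1$ into the clean bound $D \leq 3n/(\delta+1) - 1$, which works out exactly because of the constant $2$ picked up from the floor together with the $-3$ from $3(n/(\delta+1) - 1)$.
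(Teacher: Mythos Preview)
Your argument is correct and is essentially the standard proof of this classical result. Note, however, that the paper does not prove this lemma at all: it is simply quoted from Erd\H{o}s, Pach, Pollack, and Tuza, so there is no ``paper's own proof'' to compare against. Your disjoint--closed--neighborhood argument along a diametral path is exactly the idea behind the original proof, and your bookkeeping with $k=\lfloor D/3\rfloor$ and $D\le 3k+2$ is clean and accurate. (As a minor remark, the hypothesis $\delta\ge 2$ is not actually used in your counting; the argument goes through for any connected $G$ with minimum degree $\delta$, and the assumption $\delta\ge 2$ is only relevant in the paper for the later application.)
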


We note that Lemma~\ref{Lemma: diameter} has been improved in \cite{mukwembi2012note} to  $\diam(G) \leq \frac{3(n-t)}{\delta+1}-1+\frac{3}{\delta+1}$ for $\delta, \diam(G)\geq 5$, where $t$ is the number of distinct terms in the degree sequence of $G$. In some cases this might give a large improvement over Lemma~\ref{Lemma: diameter}. However for general graphs,
the classical result of Erd\H{o}s, Pach, Pollack, and Tuza \cite{erdHos1989radius} given in Lemma~\ref{Lemma: diameter} is sufficient.

\begin{theorem}\label{Thm: R^d Cerny}
    If $M = (Q,\Sigma,\delta)$ is a bidirectional connected difference DFA in $\mathbb{R}^d$ of order $n$,
    and if every vertex $\mb q$ in the digraph of $M$
    satisfies $|N^+(\mb q)| \geq k \geq 2$, then $M$ has a synchronizing word of length at most 
    $$ 
    \frac{9n^3-9n^2}{2(k+1)^2} + \frac{3n-3n^2}{2(k+1)} $$
    Thus, if $n\geq 5$ and $k \geq 3\sqrt{\frac{n}{2}}-1$, then $M$ has a synchronizing word of length strictly less than $(n-1)^2$.
\end{theorem}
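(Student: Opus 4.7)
The plan is to combine Theorem~\ref{Thm: R^d automatas} with the bound from Theorem~\ref{Thm: Cornering Strategy}, using Lemma~\ref{Lemma: diameter} to control the eccentricity parameter $d$. The proof of Theorem~\ref{Thm: R^d automatas} does more than assert the existence of an $f$-corner $\mb z$: it exhibits a specific $f$ for which every directed path ending at $\mb z$ is $f$-positive. Consequently $\dist^+_f(\mb u, \mb z) = \dist(\mb u, \mb z)$ for every $\mb u \in V$, and so the quantity $d = \max_{\mb u \in V} \dist^+_f(\mb u, \mb z)$ appearing in Theorem~\ref{Thm: Cornering Strategy} is bounded above by the directed diameter of $G$.

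Next I would transfer this to the underlying simple undirected graph $G'$ obtained from $G$ by forgetting orientations and discarding loops and parallel edges. Bidirectional connectivity ensures $\dist_G(\mb u, \mb v) = \dist_{G'}(\mb u, \mb v)$ for all $\mb u, \mb v \in V$, so $\diam(G) = \diam(G')$. Since each vertex $\mb x$ has at least $k$ distinct proper out-neighbors in $G$, bidirectionality gives at least $k$ incident edges in $G'$, so the minimum degree of $G'$ is at least $k \geq 2$. Lemma~\ref{Lemma: diameter} therefore yields $d \leq \diam(G') \leq \frac{3n}{k+1} - 1$. Substituting this into $\frac{(n-1)d(d+1)}{2}$ from Theorem~\ref{Thm: Cornering Strategy} and expanding produces the stated upper bound
\[
\frac{9n^3-9n^2}{2(k+1)^2} + \frac{3n - 3n^2}{2(k+1)}.
\]

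For the \v{C}ern\'y conclusion, I would rewrite this bound as $B(t) = \frac{9n^2(n-1)}{2t^2} - \frac{3n(n-1)}{2t}$ with $t = k+1$, verify that $B$ is decreasing on the feasible interval $2 \leq t \leq n$ (since $B'(t)$ has the sign of $t-6n$), and evaluate at the smallest admissible value $t = 3\sqrt{n/2}$. A short calculation gives $B\bigl(3\sqrt{n/2}\bigr) = (n-1)\bigl(n - \tfrac{1}{2}\sqrt{2n}\bigr)$, and comparing to $(n-1)^2 = (n-1)(n-1)$ reduces to the trivial inequality $\sqrt{2n} > 2$, i.e.\ $n > 2$. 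The hypothesis $n \geq 5$ enters only because it is exactly the condition under which $3\sqrt{n/2} - 1 \leq n-1$, so the hypothesized range of $k$ is nonempty. The main obstacle is less substantive than bureaucratic: one must verify the passage from the directed multigraph $G$ to the simple graph $G'$ preserves the degree bound, and confirm the monotonicity of $B$ before evaluating at the boundary of the allowed range.
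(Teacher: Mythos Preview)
Your proposal is correct and follows essentially the same route as the paper: invoke the $f$-corner from Theorem~\ref{Thm: R^d automatas} with $\dist^+_f(\mb u,\mb z)=\dist(\mb u,\mb z)$, pass to the underlying simple graph, apply Lemma~\ref{Lemma: diameter}, and substitute into the bound from Theorem~\ref{Thm: Cornering Strategy}. The only cosmetic difference is in the \v{C}ern\'y step, where the paper substitutes $D\le\sqrt{2n}-1$ directly and bounds $D(D+1)\le 2n-\sqrt{2n}<2(n-1)$, whereas you phrase the same computation as evaluating the decreasing function $B(t)$ at $t=3\sqrt{n/2}$; both yield $(n-1)\bigl(n-\tfrac{1}{2}\sqrt{2n}\bigr)<(n-1)^2$.
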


\begin{proof}
    Let $G=(V,E,\psi)$ be
    the digraph of $M$,
    and suppose that
    $|N^+(\mb q)| \geq k \geq 2$ 
    for all $\mb q \in V(G)$.
    By Theorem~\ref{Thm: R^d automata},
    $M$ has
    a $\mb z$-synchronizing word of length at most $\frac 12 d(d+1) (n-1) $,
    where $d = \max_{\mb q \in V(G)} \dist^+_f(\mb x, \mb z) \leq \diam(G)$, and $\diam(G)$ is the directed diameter of $G$.

    We note that as $G$ is a bidirectional connected DFA, 
    the directed diameter of $G$ is equal to the diameter of the underlying simple graph of $G$, which we call $H$. Furthermore, we note that for all $\mb q \in V(G)$, $\deg_H(\mb q)-1 \leq |N^{+}_G(\mb q)| \leq \deg_H(\mb q)$. Hence, $\delta(H) \geq k \geq 2$ for all $n \geq 5$. Then, Lemma~\ref{Lemma: diameter} implies that
    \begin{align*}
        d \leq \diam(G) = \diam(H) \leq  \frac{3n}{\delta(H) +1}-1 \leq \frac{3n}{k +1}-1.
    \end{align*}
    Thus, $d(d+1) \leq \frac{9n^2}{(k+1)^2} - \frac{3n}{k+1}$ implying that
    \begin{align*}
        \frac 12 d(d+1)(n-1)  \leq \frac{9n^3-9n^2}{2(k+1)^2} + \frac{3n-3n^2}{2(k+1)},
    \end{align*}
    as desired. Letting $k \geq 3\sqrt{\frac{n}{2}}-1$,
    we note that $\frac{3n}{k +1}-1 \leq \sqrt{2n} - 1$. Thus, for $n\geq 5$,
    $$d(d+1) \leq (2n - 2\sqrt{2n}+1) + (\sqrt{2n}-1) = 2n - \sqrt{2n} < 2(n -1),$$
    which implies that $\frac 12 d(d+1) n < (n-1)^2$. This concludes the proof.
\end{proof}

\section{$f$-ordered DFAs}
\label{sec:f-ordered}

The goal of this section is to 
establish a necessary and
sufficient condition involving partial orders for the existence of a short
synchronizing word in a DFA.
Connections between DFAs and partial orders are often observed
and exploited in automata theory.
For instance, the class of weakly acyclic DFAs can be defined as those DFAs for which the transition relationship induces a partial order (c.f.~\cite{Blondin}).
Furthermore, Trahtman \cite{trahtman2007v}
observed that in an aperiodic DFA, a partial order can be given to states that helps one to construct a synchronizing word efficiently. Using this idea, he proved that an $n$-state aperiodic DFA has a synchronizing word of length at most 
$\frac 12 n(n-1)$.
Ananichev and Volkov
\cite{ananichev2004synchronizing} 
 also
showed that if the states of a synchronizable $n$-state DFA
have a total order that is preserved by the action of every word, then the DFA has a synchronizing word of length at most $n-1$.

The
main result of this section shows that if a DFA
has a partial order on its states satisfying certain properties,
then the DFA has a synchronizing word whose maximum length depends on the properties of the partial order.
We also show that every synchronizable DFA is $f$-ordered for an appropriately defined $f$ and an appropriate partial order.
The rest of the section explores the consequences of these results
by considering the
upper bounds on the lengths of synchronizing words
of different DFA classes
implied by our method.

Let $M = (Q,\Sigma,\delta)$ be a DFA. For a
function $f:\Sigma^* \rightarrow \{-1,1\}$ 
satisfying $f(\epsilon) = 1$,
we say that $M$ is an \emph{$f$-ordered DFA}
if there exists a partial order $\mathcal{P}=(Q,\preceq)$ such that:   
\begin{enumerate}
    \item\label{5.1} For all $f$-positive words $\tau \in \Sigma^*$, $p \preceq q$ implies $p(\tau) \preceq q(\tau)$;
    \item\label{5.2} There exists a universally comparable element $u \in Q$ such that for all $q \in Q$, either $u \preceq q$ or $q \preceq u$;
    \item\label{5.3} 
    $Q$ has a $\mathcal P$-maximal element $m$ such that for all $q \in Q$, there is an $f$-positive word $\tau$ for which $q(\tau) = m$;
\end{enumerate}
An $f$-ordered DFA is pictured in Figure~\ref{fig:ordered-DFA}. 
In this example, $f$ is the function which maps all words to $1$. Notice that the DFA in Figure~\ref{fig:ordered-DFA} 
is not a difference DFA.
In order to prove that every $f$-ordered DFA has a synchronizing word, we need the following lemma from the theory of partially ordered sets.
Given two
partially ordered sets $(Q, \preceq)$ and $(Q', \preceq)$,
a function $g:Q \rightarrow Q'$ is a \emph{homomorphism} if $g(p) \preceq g(q)$ for each pair $p,q \in Q$ satisfying $p \preceq q$.

\begin{figure}\label{Fig:Ordered DFA}
\begin{center}
    \scalebox{0.85}{
        \begin{tikzpicture}[node distance={15mm}, thick, main/.style = {draw, circle}]

\node[draw = none, fill = none] (z) at (8,5) {$u = m$};

\node[main][fill= black] (00) at (0,0) {}; 
\node[main][fill= black] (02) at (0,2) {}; 
\node[main][fill= black] (04) at (0,4) {}; 

\node[main][fill= black] (20) at (2,0) {}; 
\node[main][fill= black] (22) at (2,2) {}; 
\node[main][fill= black] (24) at (2,4) {}; 

\node[main][fill= black] (40) at (4,0) {}; 
\node[main][fill= black] (42) at (4,2) {}; 
\node[main][fill= black] (44) at (4,4) {}; 

\node[main][fill= black] (60) at (6,0) {}; 
\node[main][fill= black] (62) at (6,2) {}; 
\node[main][fill= black] (64) at (6,4) {}; 

\node[main][fill= black] (M) at (7,5) {}; 

\draw [->,blue] (00) to [out=135,in=-135] (02);
\draw [->,ultra thick, dashed, red] (02) -- (00);
\draw [->,blue] (02) to [out=135,in=-135] (04);
\draw [->,ultra thick, dashed, red] (04) -- (02);

\draw [->,blue] (20) to [out=135,in=-135] (22);
\draw [->,ultra thick, dashed, red] (22) -- (20);
\draw [->,blue] (22) to [out=135,in=-135] (24);
\draw [->,ultra thick, dashed, red] (24) -- (22);

\draw [->,blue] (40) to [out=135,in=-135] (42);
\draw [->,ultra thick, dashed, red] (42) -- (40);
\draw [->,blue] (42) to [out=135,in=-135] (44);
\draw [->,ultra thick, dashed, red] (44) -- (42);

\draw [->,blue] (60) to [out=135,in=-135] (62);
\draw [->,ultra thick, dashed, red] (62) -- (60);
\draw [->,blue] (62) to [out=135,in=-135] (64);
\draw [->,ultra thick, dashed, red] (64) -- (62);

\draw [->,blue] (64) to [out=90,in=210] (M);
\draw [->,ultra thick, dashed, red] (M) to  [out=270,in=0] (62);
\draw [->,green!50!black, dotted]  (64) to [out=0,in=240] (M);
\draw [->,cyan, ultra thick, densely dotted] (M) to [out=180,in=90] (44);

\draw [->,green!50!black, dotted] (00) -- (20);
\draw [->,cyan, ultra thick, densely dotted] (20) to [out=135,in=45](00);
\draw [->,green!50!black, dotted] (20) -- (40);
\draw [->,cyan, ultra thick, densely dotted] (40) to [out=135,in=45](20);
\draw [->,green!50!black, dotted] (40) -- (60);
\draw [->,cyan, ultra thick, densely dotted] (60) to [out=135,in=45](40);

\draw [->,green!50!black, dotted] (02) -- (22);
\draw [->,cyan, ultra thick, densely dotted] (22) to [out=135,in=45](02);
\draw [->,green!50!black, dotted] (22) -- (42);
\draw [->,cyan, ultra thick, densely dotted] (42) to [out=135,in=45](22);
\draw [->,green!50!black, dotted] (42) -- (62);
\draw [->,cyan, ultra thick, densely dotted] (62) to [out=135,in=45](42);

\draw [->,green!50!black, dotted] (04) -- (24);
\draw [->,cyan, ultra thick, densely dotted] (24) to [out=135,in=45](04);
\draw [->,green!50!black, dotted] (24) -- (44);
\draw [->,cyan, ultra thick, densely dotted] (44) to [out=135,in=45](24);
\draw [->,green!50!black, dotted] (44) -- (64);
\draw [->,cyan, ultra thick, densely dotted] (64) to [out=135,in=45](44);

        \end{tikzpicture}
    }
\end{center}
\caption{The DFA in the figure is an example of a 
$f$-ordered DFA whose vertex set is a subset of $\mathbb Z^2$, equipped with the partial order $(x_1, y_1) \preceq (x_2,y_2)$ if and only if $x_1 \leq x_2$ and $y_1 \leq y_2$. 
Each edge color corresponds to an alphabet symbol, and loops are not drawn in the figure. 
}

\label{fig:ordered-DFA}
\end{figure}

\begin{lemma}
\label{lem:surj}
    Let $(Q, \preceq)$ and $(Q', \preceq)$ be partially ordered sets, and let $g: Q\rightarrow Q'$ be a surjective homomorphism.
    Let $M$ be the set of minimal elements of $(Q, \preceq)$, and let $M'$ be the set of minimal elements of $(Q', \preceq)$.
    Then, $M' \subseteq g(M)$.
\end{lemma}
\begin{proof}
    Suppose that the lemma is false, and let $q'$ be a minimal element of $(Q', \preceq)$
    that does not belong to $g(M)$.
    Let $q$ be in the preimage of $q'$, and note that $q$ is not minimal in $(Q, \preceq)$.
    Therefore, there exists an element $p \in Q$ for which $p \prec q$.
    Since $g(p) \neq q'$, and since $g$ is a homomorphism, we have $g(p) \prec g(q) = q'$, contradicting the assumption that $q'$ is minimal.
\end{proof}

Our proof employs a similar approach to that of Theorem~\ref{Thm: R^d automata}.
Specifically, our strategy shows that repeatedly moving active states of an $f$-ordered DFA to the maximal element $m$ in an appropriate way must produce a synchronizing word; in this sense, the maximal element $m$ functions similarly to the corner state in the cornering strategy.
However, we are often able to take advantage of the partially ordered structure to make our synchronizing word shorter.

\begin{theorem}\label{Thm: Partial Ordered DFA}
    Let $M = (Q,\Sigma,\delta)$
    be an $f$-ordered DFA with a partial order $\mathcal{P} = (Q,\preceq)$, 
    and let $u,m \in Q$ be as specified in (\ref{5.2}) and (\ref{5.3}).
    Suppose that:
    \begin{itemize}
        \item $\dist_f^+(q,m) \leq d$ for every $q \in Q$;
        \item $\mathcal P$ has at most $k$ minimal elements.
    \end{itemize}
    Then, $M$ has a synchronizing word
    of length at most $dk$.
\end{theorem}

\begin{proof}
For each $q \in Q$, let $\tau_q$ be a shortest $f$-positive word for which $q(\tau_q) = m$; note that $\tau_q$ exists by (\ref{5.3}).
 Note also that each word
 $\tau_q$ has length at most $d$.
 Consider also a pair $p,q \in Q$ such that $p \prec q$.
 As $\tau_p$ is $f$-positive, we observe from
 (\ref{5.1}) that $p(\tau_p) = m\preceq q(\tau_p)$. As $p(\tau_p) = m$ is maximal,
 this implies that $p(\tau_p) = m = q(\tau_p)$. 

For each $q \in Q$, we 
define $M_q = (Q,\Sigma,\delta,q,\{m\})$, so that $M_q$ is a copy of $M$
with an initial active state $q$ and a single accept state $m$.
We let $M' = \prod_{q \in Q} M_q$, so that $M'$ accepts a word $\tau$ if and only if $\tau$ is an $m$-synchronizing word.
We aim to construct a word $\pi$ accepted by $M'$.
We construct $\pi$ in iterations, and during these iterations, we pass words to the automata $M_q$, updating their active states.

We write $\pi_0 = \epsilon$ for the empty word.
For each $i \geq 0$,
whenever we define a word $\pi_i$, we assume that the word $\pi_i$
is passed to all automata $M_q$, updating their active states.
For each $i \geq 0$ for which $\pi_i$ is defined, we let $Q_i$ denote the set of active states of all automata $M_q$ after $\pi_0 \cdots \pi_i$ is passed.
Initially, $Q_0 = Q$.

We proceed in iterations. For $i \geq 1$, we 
proceed as follows.
If $|Q_{i-1}| = 1$, then 
then the procedure halts without executing iteration $i$.
Otherwise, iteration $i$ is executed as follows. 
We assume that the words $\pi_0,\pi_1,\dots, \pi_{i-1}$ have already been defined and passed to our automata in that order. We proceed in cases based on the active states of our automata, as follows.

\vspace{0.25cm}
\underline{Case 1:} If there exists an automaton $M_q$ with active state $p \prec m$, then:

\vspace{0.25cm}
Choose $M_q$ so that the active state $p \prec m$ of $M_q$ is minimal in the partially ordered set $(Q_{i-1}, \preceq)$.
Then,
pass the word $\pi_i = \tau_p$ to all automata. 
As the action of $\pi_i$ preserves the partial order between states,
$p(\pi_i) = m = m(\pi_i)$.
Then, proceed to iteration $i+1$, in which the process halts or follows Case 1 or 2.

\vspace{0.25cm}
\underline{Case 2:} If  no automaton has an active state $p \prec m$, then:

\vspace{0.25cm}
Let $\pi_i = \tau_u$. 
We observe that $u \prec p$ for each active state $p \in Q_{i-1}$, as otherwise $p \preceq u \preceq m$, a contradiction. Therefore, for every active state $p \in Q_{i-1}$, $ u(\tau_u) = m(\tau_u)  = p(\tau_u) = m$.
Then, $Q_i = \{m\}$,
and our procedure halts before iteration $i+1$.

\vspace{0.25cm}
\begin{claim}
    For each iteration $i$ executed by our algorithm,
    the number of minimal elements in 
$(Q_i \setminus \{m\}, \preceq)$ is less than the number of minimal elements in $(Q_{i-1} \setminus \{m\}, \preceq)$.
\end{claim}
\begin{proof}
Consider an iteration $i$ of our algorithm.
Note that the number of minimal elements in  $(Q_{i-1} \setminus \{m\}, \preceq)$ is at least $1$, or else our algorithm would halt before executing iteration $i$.
If iteration $i$ follows Case 2, then 
$Q_i = \{m\}$, and hence
$(Q_i \setminus \{m\}, \preceq)$
has no minimal element, and the claim holds.
Otherwise, suppose that Case 1 is followed.
As $\pi_i$ is an $f$-positive word, the action $g:Q_{i-1} \rightarrow Q_i$ of $\pi_i$ on the set $Q_{i-1}$ is a homomorphism with respect to the partial order $\preceq$. Furthermore, clearly $g$ is surjective.
Therefore, 
by Lemma \ref{lem:surj},
each minimal element of $(Q_i, \preceq)$ is the image of a minimal element of $(Q_{i-1}, \preceq)$ under $g$.
Furthermore, as observed above, $m(\pi_i) = m$.
Hence,
each minimal element of $(Q_i \setminus \{m\}, \preceq)$ is the image of a minimal element of $(Q_{i-1} \setminus \{m\}, \preceq)$ under $g$.
Finally, $g$ maps some minimal element $p$ of 
$(Q_{i-1} \setminus \{m\}, \preceq)$ to $m$.
Therefore, $p(\pi_i)$ is not a minimal element in 
$(Q_i \setminus \{m\}, \preceq)$, and 
our claim holds.
\end{proof}

By our claim,
we see that our algorithm terminates after executing at most $k$ iterations.
Indeed, 
as $(Q_0 \setminus \{m\}, \preceq)$ has at most $k$ minimal elements, it follows that
after some number $\ell \leq k$ iterations, $(Q_{\ell} \setminus \{m\}, \preceq)$ has no minimal element, implying $Q_{\ell} = \{m\}$.
In other words, 
for some $\ell \leq k$, $\pi_0 \cdots \pi_{\ell}$ is an $m$-synchronizing word.
As the word $\pi_i$ produced on each iteration $i$ has length at most $d$,
our algorithm produces a synchronizing word of length at most $dk$.
This completes the proof.
\end{proof}

The following theorem shows that in fact every synchronizable DFA is $f$-ordered for an appropriately defined function $f$ and an appropriate partial order.

\begin{theorem}
\label{thm:f-char}
    Let $M = (Q,\Sigma,\delta)$ be a DFA. Then, 
    $M$ is $f$-ordered for some
    function $f: \Sigma^* \rightarrow \{-1,1\}$ satisfying $f(\epsilon) = 1$
    if and only if $M$ is synchronizable.
\end{theorem}
\begin{proof}
    If $M$ is $f$-ordered for some
    function $f: \Sigma^* \rightarrow \{-1,1\}$ satisfying $f(\epsilon) = 1$,
    then
    Theorem \ref{Thm: Partial Ordered DFA} implies that $M$ is synchronizable. 
    On the other hand, suppose that $M$ has a $z$-synchronizing word for some $z \in Q$.
    We define a partial order $\mathcal P = (Q,\preceq)$ 
    on $Q$ by letting $q \preceq z$ for all $q \in Q$ and by letting each pair $p,q \in Q \setminus \{z\}$ be incomparable.
    In other words, the Hasse diagram of $\mathcal P$ is a star with a center $z$ that lies above
    $|Q| - 1$ leaves.
    We let $f(\epsilon) = 1$, and for
    each
    nonempty $\tau \in \Sigma^*$, we let $f(\tau) = 1$ if and only if $\tau$ is a $z$-synchronizing word.
    Then, we check that the definition of an $f$-ordered DFA holds for $M$ when letting $z$ play the role of both $m$ and $u$.

    If $p \preceq q$ and 
    $\tau$ is a nonempty $f$-positive word, then
    $p(\tau) = q(\tau) = z$.
    Furthermore, clearly $p(\epsilon) \preceq q(\epsilon)$. Therefore,
    (\ref{5.1}) holds. Since $q \preceq z$ for all $q \in Q$, (\ref{5.2}) holds. Since $z$ is $\mathcal P$-maximal, and since each state $q \in Q$ is mapped to $z$ by a $z$-synchronizing word, (\ref{5.3}) holds.
    Therefore, $M$ is an $f$-ordered DFA.
\end{proof}

We observe that
Theorem \ref{Thm: Partial Ordered DFA} has several corollaries related to existing DFA classes.
Our first corollary is related to
DFAs with a partial order on their states that is preserved by the action of every word.
Given a partially ordered set $\mathcal P = (X, \preceq)$ we say that an element $m \in X$ is \emph{maximum} if $x \preceq m$ for all $x \in X$.

\begin{corollary}\label{Coro: max element f-everything}
    Let $M = (Q,\Sigma,\delta)$ be a DFA. Suppose there exists a partial order $\mathcal{P}=(Q,\preceq)$ such that  
    \begin{enumerate}
        \item for all words $\tau \in \Sigma^*$, $p \preceq q$ implies $p(\tau) \preceq q(\tau)$, and
        \item $\mathcal P$ has a universally reachable maximum element $m$.
    \end{enumerate}
    If $\dist(q,m) \leq d$ for each $q \in Q$
    and $\mathcal P$ has at most $k$ minimal elements,
    then $Q$
    has a synchronizing word of length at most $dk$.
\end{corollary}

\begin{proof}
    Define a function $f: \Sigma^* \rightarrow \{-1,1\}$ that maps all words to $1$.
    We observe that $M$ is $f$-ordered, and that the maximum element of $\mathcal P$ plays the role of both $m$ and $u$ in the definition of the $f$-ordering.
    Thus, by Theorem \ref{Thm: Partial Ordered DFA},
    $M$ has a 
    synchronizing word of length $dk$.
\end{proof}

We note that Corollary \ref{Coro: max element f-everything}
implies a special case of the following result of Ananichev and Volkov \cite{ananichev2004synchronizing}:
\begin{theorem}[\cite{ananichev2004synchronizing}]
    Let $M = (Q, \Sigma,\delta)$ be a DFA, and suppose that $Q$ has a total order that is preserved by the action of every word of $\Sigma^*$.
    If $M$ has a word of rank $r$, then some rank-$r$ word of $M$ has length at most $n-r$. In particular, if $M$ is synchronizable, then some synchronizing word of $M$ has length at most $n-1$.
\end{theorem}
Using Corollary \ref{Coro: max element f-everything},
if such a 
 DFA $M = (Q,\Sigma,\delta)$ with a total order has a universally reachable maximum element $m \in Q$,
then we may define $f:\Sigma^* \rightarrow \{-1,1\}$ so that every word $\tau \in \Sigma^*$ is $f$-positive.
Then, it is easy to see that $\dist(q,m) \leq n-1$ for each $q \in Q$. Since $\mathcal P$ is a total order, $\mathcal P$ has only one minimal element, which implies that $M$ has a synchronizing word of length at most $n-1$.

It is also interesting to note that Theorem \ref{Thm: Partial Ordered DFA}
nearly implies the exact length of a shortest synchronizing word
of \v{C}ern\'y's
DFA
 \cite{vcerny1964poznamka} that is conjectured to be extremal for \v{C}ern\'y's conjecture, which defined as follows.
Let $M = (Q,\Sigma,\delta)$, $Q  = \mathbb Z_n$, and $\Sigma = \{\texttt a, \texttt b\}$.
Let $Q$ have a total order $0 \leq 1 \leq \cdots \leq n-1$.
Let $\delta(0,\texttt a) = n-1$, and let $\delta(t,\texttt a) = t$ for each $t \in \mathbb Z_n \setminus \{0\}$. Let $\delta(t,\texttt b) = t-1$ for each $t \in \mathbb Z_n$.
Let $f$ be defined so that the $f$-positive words are 
exactly the words generated by concatenations of $\texttt b^{n-1} \texttt a$.
Then, it is straightforward to see that $M$ is an $f$-ordered DFA with a maximal element $n-1$. 
Furthermore, for each state $q \in Q$, at most $n-1$ applications of $\texttt b^{n-1} \texttt a$ 
map $q$ to $n-1$.
Therefore, applying Theorem \ref{Thm: Partial Ordered DFA} with
$m = u = n-1$, $d = n(n-1)$, and $k = 1$, we see that $M$ has a synchronizing word of length at most $n(n-1)$, which is close to the correct value of $(n-1)^2$.

Finally, we 
note that Theorem~\ref{Thm: Partial Ordered DFA} implies
an upper bound on the length of synchronizing words in aperiodic DFAs that arises naturally from an existing argument of Trahtman \cite[Lemma 4]{trahtman2007v}.

\begin{corollary}\label{Corollary: Aperiodic short words}
Let $M = (Q, \Sigma, \delta)$
    be an $n$-state aperiodic DFA, and let $\ell$ be an integer such that for every $q \in Q$,
    $q(\tau^{\ell}) = q(\tau^{\ell+1})$.
    If $M$ is synchronizable, then $M$ has a synchronizing word of length $(n-1)(\ell+1)d$ where $d = \min_{z\in Q}\max_{q \in Q} \dist(q,z)$.
\end{corollary}

\begin{proof}    
    As $M$ is synchronizable, $M$ has a universally reachable state.
    We choose a state $z \in Q$ to minimize $d = \max_{q \in Q} \dist(q,z)$.
    Note that since $z$ can (and must) be chosen to be universally reachable, $d$ is finite.
    For each $q \in Q$, let $\tau_q$ be a word of length at most $d$ for which $q(\tau_q) = z$.
    
    Let $f: \Sigma^* \rightarrow \{-1,1\}$ be the function that satisfies $f(\tau)=1$ if and only if $z(\tau)=z$. We define the partial order $\mathcal{P}=(V,\preceq)$ as follows; 
    $q \preceq z$ for each $q \in Q$, and $q$ and $p$ are incomparable for each distinct pair $q,p \in Q \setminus \{z\}$.
    Note that $\mathcal P$ has $n-1$ minimal elements.
    
    We claim that $\mathcal P$ is an $f$-ordering of $G$.
    To show (\ref{5.1}), suppose $p,q \in Q$ and $p \preceq q$, and let $\tau \in \Sigma^*$ be an $f$-positive word. If $p=q$, then $p(\tau) = q(\tau)$. Otherwise, $q = z$, so that $p(\tau) \preceq z = q(\tau)$.
    To show (\ref{5.2}), let $u = z$, and observe that $z$ is comparable to every element of $Q$. To show (\ref{5.3}), let $m = z$.
    Then, note that for each $q \in Q$, $q(\tau_q^{\ell}) = q(\tau_q^{\ell+1}) = z(\tau_q^{\ell}) = p$ for some state $p \in Q$, so that $\pi:=\tau_q^{\ell} \tau_p$ is an $f$-positive word of length at most $(\ell+1)d$
    satisfying $q(\pi) = z$.

    To finish the proof, we apply Theorem \ref{Thm: Partial Ordered DFA} with $u = m = z$, $k=n-1$, and with $(\ell+1) d $ in the place of $d$.
\end{proof}

It is also natural to ask if our assumption in Theorem~\ref{Thm: Partial Ordered DFA} that there exists a universally comparable state $u$ can be dropped.
That is, does there exist a DFA $G$ admitting a non-trivial partial order $\mathcal{P}$ with no universally comparable state, where $\mathcal{P}$ is preserved by the action of every word, but $G$ is not synchronizable? The answer is yes. 

For example,
consider the following DFA 
$M = (Q,\Sigma,\delta)$
for which $Q = \mathbb Z_n \times \mathbb Z_2$.
We let $\Sigma = \{\texttt{up},\texttt{down},\texttt{turn}\}$.
For each $(p,q) \in \mathbb Z_n \times \mathbb Z_2$,
we let $\delta((p,q), \texttt{up}) = (p,1)$, and we let 
$\delta((p,q), \texttt{down}) = (p,0)$.
Furthermore, we let $\delta((p,q),\texttt{turn}) = (p+1,q)$.
We also define a partial order $\mathcal P = (Q, \preceq)$ 
so that reflexivity holds, $(p,0) \preceq (p,1)$ for all $k \in \mathbb Z_n$, and so that every other pair of elements in $Q$ is incomparable. 
We let $f(\tau) = 1$ for every $\tau \in \Sigma^*$, and we observe the action of every word of $\Sigma^*$ preserves the partial order $\preceq$.
Then, $M$ is not synchronizable,
as for all distinct $p,p' \in \mathbb Z_n$, $q \in \mathbb Z_2$,
and words $\tau \in \{\texttt{up},\texttt{down},\texttt{turn}\}^*$, $(p,q)(\tau) = (p+x,q') \neq (p'+x,q') = (p',q)(\tau)$, where $q' \in \mathbb Z_2$, and $x$ is the number of times \texttt{turn} appears in $\tau$.

\section{Synchronizing Product DFAs}
\label{sec:product}
In this section we consider the problem of synchronizing a DFA $M_1$ to a prescribed state $z_1$ and a second disjoint DFA $M_2$ to a prescribed
state $z_2$, using a common
word $\tau$. 
We do not assume that $M_1$ and $M_2$ are related beyond having the same alphabet $\Sigma$. This is a natural problem to consider in certain applications. For example, if we take $M_1$ and $M_2$
to be distinct difference DFAs in $\mathbb{R}^2$ with a common state $z$, 
and if we let $z_1 = z_2 = z$, then this
setting
models the problem of 
using the same commands to guide two delivery robots through different parts of a warehouse so that both robots simultaneously arrive at the front door to drop off their respective packages. 
The assumption of needing a common set of commands for all machines is reasonable when using simple machines that cannot receive individualized commands.

One natural model for this situation is a product DFA.
If $M_1$ and $M_2$ are DFAs on the same alphabet
with respective digraphs $G_1 = (V_1,E_1,\psi_1)$ and $G_2 = (V_2,E_2,\psi_2)$, then the definition of a product DFA in Section \ref{sec:prelim}
implies that $M_1 \times M_2$ is represented by the following edge-labeled digraph $G = (V,E,\psi)$:
\begin{enumerate}
    \item $V(G) = V_1 \times V_2$;
    \item $((a,b),(u,v)) \in E(G)$ if and only if $(a,u)\in E_1$, $(b,v) \in E_2$, and $\psi_1(a,u) = \psi_2(b,v)$;
    \item   $\psi((a,b),(u,v)) = \psi_1(a,u) = \psi_2(b,v)$ for all $((a,b),(u,v)) \in E(G)$.
\end{enumerate}

For an example about how this operation is of natural interest, suppose that we have a difference DFA $M_1$ 
that models a location of a simple machine
and a DFA $M_2$ which models the current state of the same machine. As the machine is simple, suppose that the commands it receives to change location and state are the same. Constructing such machines may be cheaper, or in some contexts such as for microscopic machines, necessary. As the controller, we are interested in the state and location of the machine. Thus, the DFA $M_1 \times M_2$ is required, as we want the machine to be in the correct location with the correct state to accomplish its task.

The question now is, 
given $M_1 = (Q_1, \Sigma, \delta_1)$ and $M_2 = (Q_2, \Sigma, \delta_2)$,
for which $z_1 \in Q_1$ and $z_2 \in Q_2$ does there exist a synchronizing word which results in an active state of $(z_1,z_2)$ in $M_1 \times M_2$? We note that if $M_1 \times M_2$ is strongly connected and synchronizable, then any active state of $M_1 \times M_2$ is achievable by a synchronizing word. 
That is,
to reach the desired active state $a$, simply apply a synchronizing word $\tau$ to reach some
state $b$, then apply a word $\pi$ for which $b(\pi) = a$.
However, even if $M_1$ and $M_2$ are strongly connected,
it may not hold that 
$M_1 \times M_2$ is strongly connected.  See Figure~\ref{fig:DFA-sum} 
for an example of two strongly connected DFAs $M_1$ and $M_2$ whose product is not strongly connected. 
Hence, such a word $\pi$ need not exist for every $(z_1, z_2) \in Q_1 \times Q_2$. 
Hence,
it is worth asking for which states $(z_1, z_2) \in Q_1 \times Q_2$ the automaton $M_1 \times M_2$ has a $(z_1, z_2)$-synchronizing word.

\begin{figure}\label{Fig:GxH}
\begin{center}
    \scalebox{0.85}{
        \begin{tikzpicture}[node distance={15mm}, thick, main/.style = {draw, circle}] 

\node[main][fill= black] (v) at (1,1) {};
    \node[fill=none] at (1.5,1) (nodes) {$q$};
\node[main][fill= black] (u) at (-1.5,1) {};
    \node[fill=none] at (-2,1) (nodes) {$p$};
\node[fill=none] at (-3,1) (nodes) {$M_1:$};

\node[main][fill= black] (z) at (4,0) {};
    \node[fill=none] at (4,-0.5) (nodes) {$z$};
\node[main][fill= black] (y) at (5,1) {};
    \node[fill=none] at (5.5,1) (nodes) {$y$};
\node[main][fill= black] (x) at (4,2) {}; 
    \node[fill=none] at (4,2.5) (nodes) {$x$};
\node[fill=none] at (2.5,1) (nodes) {$M_2:$};

\draw [->,black] (v) to (u);
\draw [->,black] (u) to [out=45,in=135] (v);
\draw [->, ultra thick, dashed, red] (u)  to [out=-45,in=-135]  (v);

\draw [->,black] (y) to (z);
\draw [->,black] (z) to (x);
\draw [->, ultra thick, dashed, red] (x) to (y);

\node[fill=none] at (-2.5,-3) (nodes) {$M_1 \times M_2:$};

\node[main][fill= black] (6) at (4,-4) {};
\node[main][fill= black] (5) at (5,-3) {};
\node[main][fill= black] (4) at (4,-2) {};

\node[main][fill= black] (3) at (0,-4) {};
\node[main][fill= black] (2) at (1,-3) {};
\node[main][fill= black] (1) at (0,-2) {}; 

\draw [->,black] (1) to (4);
\draw [->, ultra thick, dashed, red] (1) to (5);

\draw [->,black] (2) to (6);
\draw [->, ultra thick, dashed, red] (2) to (5);

\draw [->,black] (3) to (4);
\draw [->, ultra thick, dashed, red] (3) to (6);

\draw [->,black] (4) to [out=135,in=45] (1);
\draw [->, ultra thick, dashed, red] (4) to (5);

\draw [->,black] (5) to [out=-90,in=-45] (3);

\draw [->,black] (6) to [out=-130,in=-90] (1);
        \end{tikzpicture}
    }
\end{center}
\caption{
An example of a product 
of DFAs with no loops drawn. In the drawing of $M_1 \times M_2$, the three vertices on the left have the form $(p,\cdot)$ while the three vertices on the right have the form $(q,\cdot)$.}
\label{fig:DFA-sum}
\end{figure}

We are able to prove
that if DFAs $M_1$ and $M_2$ have $f$-corners $z_1$ and $z_2$, respectively, then $M_1 \times M_2$ is $(z_1,z_2)$-synchronizable. It is unclear if $(z_1,z_2)$ is always an $f$-corner in $M_1 \times M_2$. Thus, our proof employs a slightly more complicated approach than the cornering strategy; 
however the fundamentals of the proof are the same.

\begin{theorem}\label{Thm: Two Corner Disjoint DFAs}
    Let $M_1 = (Q_1,\Sigma,\delta_1)$
    and $M_2 = (Q_2, \Sigma, \delta_2)$
    be DFAs with a common alphabet $\Sigma$ with digraphs $G$ and $H$ respectively, and let $f:\Sigma^* \rightarrow \{-1,1\}$ satisfy $f(\epsilon) = 1$.
    If $z_1$ is an $f$-corner in $M_1$ and $z_2$ is an $f$-corner in $M_2$, then $M_1 \times M_2$ has a
    $(z_1,z_2)$-synchronizing word
    of length at most 
    \[
     \frac{(n_1+n_2+1)d^2 + (n_1+n_2)}{2},
    \]
    where $|Q_1|= n_1$, $|Q_2| = n_2$ and $d = \min \{ \max_{q\in Q_1}\dist^+_{G,f}(q,z_1), \max_{p\in Q_2}\dist^+_{H,f}(p,z_2)\}$.
\end{theorem}

\begin{proof}
     We let $G = (V_1,E_1,\psi_1)$ be the digraph representation of $M_1$, and we let $H = (V_2, E_2,\psi_2)$ be the digraph representation of $M_2$.
     Without loss of generality, assume $d = \max_{q\in Q_1}\dist^+_{G,f}(q,z_1)$.
     For convenience, we identify $V_1$ with $Q_1$ and $V_2$ with $Q_2$.
     We note that by Theorem~\ref{Thm: Cornering Strategy},
     $M_1$ is $z_1$-synchronizable and $M_2$ is $z_2$-synchronizable.

     Let $\tau_1$ be a $z_1$-synchronizing word for $M_1$,
     and let $\tau_2$ be a $z_2$-synchronizing word for $M_2$. 
     Then,
     for 
     each state $(q_1, q_2) \in Q_1 \times Q_2$,
     we have $(q_1,q_2)(\tau_1\tau_2) = (q,z_2)$, where $q = z_1(\tau_2)$. It remains to show that there exists a 
    word $\pi \in \Sigma^*$
    for which 
      $(q,z_2)(\pi) = (z_1,z_2)$.

     To construct our word $\pi,$ we iterate $i = 0,1, \dots,  \dist_{G,f}^+(q,z_1)$, and on each iteration, we define a word $\pi_i$.
    Ultimately, our word $\pi$ will be the concatenation of all words $\pi_i$.
    
    For each $i$, we write $(x_i, y_i) = (q,z_2)(\pi_0 \cdots \pi_i)$.
     We also let $D_i = \dist^+_{G,f}(x_i,z_1)+\dist^+_{H,f}(y_i,z_2)$. 
     Note that if $D_i = 0$, then $(q,z_2)(\pi_0 \cdots \pi_i) = (z_1,z_2)$.

    For iteration $i = 0$, we define $\pi_0 = \epsilon$.
    Thus,
    $(x_0,y_0) = (q,z_2)$, and 
    $D_0 = \dist_{G,f}^+(q,z_1)$.

    Now, suppose that $i \geq 1$.
     At the beginning of iteration $i$,
     we consider the automaton $M_1 \times M_2$ with active state $(x_{i-1},y_{i-1})$.
     If $i$ is odd,
     let $W_i$ be a shortest $f$-positive walk in $G$ from $x_{i-1}$ to $z_1$, and
     if $i$ is even,
     let $W_i$ be a shortest $f$-positive walk in $H$ from $y_{i-1}$ to $z_2$. 
     Let $\pi_i = \psi_j(W_i)$,
     where $j \in \{1,2\}$ has the same parity as $i$.
     Then, $x_i = x_{i-1}(\pi_i)$ and $y_i = y_{i-1}(\pi_i)$. This completes iteration $i$.

    Observe that if $i$ is odd then $x_i = z_1$,
    and if $i$ is even then $y_i = z_2$. So, if $i\geq 1$ is odd, then
     $$
     D_{i} = \dist_{H,f}^+(y_i,z_2) < \dist_{H,f}^+(z_2,y_i) \leq |\pi_i| = D_{i-1},
     $$
    as $z_2$ is an $f$-corner and $y_i = y_{i-1}(\pi_i) = z_2(\pi_i)$ for an $f$-positive word $\pi_i$. Similarly if $i \geq 1$ is even, then
     $$
     D_{i} = \dist_{G,f}^+(x_i,z_1) < \dist_{G,f}^+(z_1,x_i) \leq |\pi_i| = D_{i-1},
     $$
    as $z_1$ is an $f$-corner and $x_i = x_{i-1}(\pi_i) = z_1(\pi_i)$ for an $f$-positive word $\pi_i$.

    Thus, for all $1 \leq i \leq \dist^+_{G,f}(q,z_1)$, $D_{i} < D_{i-1}$,
    implying that when $k = \dist^+_{G,f}(q,z_1)$, $D_k = 0$.
    Letting $\pi = \pi_0 \cdots \pi_k$,
    we thus see that $(q,z_2)(\pi) = (z_1,z_2)$.
    Therefore, 
    $\tau_1 \tau_2 \pi$ is a $(z_1,z_2)$-synchronizing word. 

    Now let $d = \min \{ \max_{q\in V(G)}\dist^+_{G,f}(q,z_1), \max_{p\in V(H)}\dist^+_{H,f}(p,z_2)\}$.
    Observe that $D_0 \leq \dist^+_{G,f}(q,z_1) \leq d$.
    Hence, $|\pi| \leq d^2$.
    Moreover, if $M_1$ has order $n_1$ and $M_2$ has order $n_2$, 
    then Theorem~\ref{Thm: Cornering Strategy} implies
    \[
    |\tau_1| \leq \frac 12 d(d+1) n_1
    \]
    and 
    \[
    |\tau_2| \leq \frac 12 d(d+1) n_2.
    \]
    Therefore, $\tau_1 \tau_2 \pi$ has length at most 
    \begin{align*}
        \frac{(n_1+n_2+1)d^2 + (n_1+n_2)}{2}.
    \end{align*}
    This completes the proof.
\end{proof}

Notice that this bound implies a stronger result than \v{C}ern\'y's conjecture
for some product DFAs.
For example,
Theorem~\ref{Thm: Two Corner Disjoint DFAs} implies that if $M_1$ and $M_2$ are difference DFAs, both with an $f$-corner where $f$ is the function described in Theorem~\ref{Thm: R^d automata}, and both of order $\sqrt{n}$,
then $M_1\times M_2$ has order $n$ and 
admits synchronizing word of length $(1+o(1))n^{3/2}$.
This is much less than the $(n-1)^2$ bound predicted by \v{C}ern\'y's conjecture.

\section{Future Work}

While we have resolved several questions regarding the synchronizablility of DFAs,
many more questions
remain open. Chief among these is \v{C}ern\'y's conjecture;
however we will focus our attention here on more manageable, or at least more novel open problems.

The first of these problems is whether or not the bound given in Theorem~\ref{Thm: Cornering Strategy} for the length of a shortest synchronizing word is tight.
That is, do there exist DFAs whose shortest synchronizing words have length approximately $\frac 12 d(d+1)(n-1)$ where $d = \max_{q \in Q} \dist_f^+(q,z)$ for an $f$-corner $z$? Moreover, if the answer is yes, then does this remain true if we also require that $d \geq N$ for some large integer $N$?

\begin{conjecture}\label{Conj: Tight bound}
    Let $\Sigma$ be an alphabet, let $f:\Sigma^* \rightarrow \{-1,1\}$ satisfy $f(\epsilon) = 1$, and let $d \in \mathbb{N}$.
    There exists a constant $c$ independent of $d$ and DFA a 
    $M = (Q,\Sigma,\delta)$ of some order $n$ with an
    $f$-corner $x$,
    where $d = \max_{q \in Q} \dist^+_f(q,x)$ and each synchronizing word for $M$ has length at least $cnd^2$.
\end{conjecture}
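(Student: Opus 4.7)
The plan is to construct, for each $d \in \mathbb{N}$, a DFA $G = G_{n,d}$ on $n = n(d)$ states together with a function $f : \Psi^* \to \{-1,1\}$ and an $f$-corner $x$ satisfying $\max_{u \in V} \dist_f^+(u,x) = d$, such that every synchronizing word has length at least $c\,n\,d^2$ for a universal constant $c$. The natural starting point is the classical \v{C}ern\'y automaton on $d+1$ states, whose shortest synchronizing word has length exactly $d^2$. This automaton delivers the correct ``quadratic in $d$'' behaviour in isolation, but on its own only yields length $d^2 \sim n^2$ when $n \approx d+1$, short of the target $nd^2$ by a factor of $d$.

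To reach $\Omega(nd^2)$, I would augment the \v{C}ern\'y gadget with a long ``delay'' sub-DFA: take a directed cycle (or path) of length $k := \lfloor n/(d+1) \rfloor$ and replace each of its vertices by a copy of the \v{C}ern\'y gadget, introducing movement letters that advance a robot along the cycle only when all robots occupying the corresponding copy are in a prescribed ``aligned'' position. The $f$-corner $x$ is taken to be the terminal state of the last copy, and $f$ is chosen so that every directed walk heading towards $x$---through progressively later gadgets and then through progressively later states of the final gadget---is $f$-positive, while every walk leaving $x$ is $f$-negative. A direct computation of shortest $f$-positive distances should then yield $\max_{u} \dist_f^+(u,x) = d$, provided the movement letters are calibrated so that long walks along the cycle are not $f$-positive and a single traversal of the final \v{C}ern\'y gadget contributes the full $d$.

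The crux, and the main obstacle, is the lower bound. The idea is to construct a potential function $\Phi$ on multisets of states---tracking both the ``column deficit'' along the delay cycle and the ``\v{C}ern\'y deficit'' inside each gadget---that decreases by at most a constant per letter, whose value on the full configuration $V$ is $\Omega(nd^2)$, and whose value at any singleton is $0$. Decreasing $\Phi$ should require a word that both forces a full \v{C}ern\'y-style synchronization inside some gadget \emph{and} advances a marker along the delay cycle; the gadgets are coupled so that advancing one gadget's marker demands completely synchronizing the preceding gadget, which itself costs $\Omega(d^2)$. Summing over the $k \approx n/d$ gadgets and the $\Omega(d)$ rounds of \v{C}ern\'y-style synchronization in each produces the desired $\Omega(nd^2)$ bound.

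The hardest step is engineering the coupling so that (i) the $f$-corner distance is exactly $d$ and not inflated by the delay traversals, and (ii) the potential argument really yields $\Omega(nd^2)$ rather than the weaker $\Omega(nd)$ or $\Omega(d^2)$ that naive concatenations produce. Standard techniques from lower-bound arguments for \v{C}ern\'y-type automata should be adaptable, but a genuinely new variant is likely needed; this is consistent with the fact that an unconditional $\Omega(n^3)$ lower bound on synchronizing words for general $n$-state DFAs remains open, and Conjecture~\ref{Conj: Tight bound} in the regime $d = \Theta(n)$ would imply such a bound.
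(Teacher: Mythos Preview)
The statement you are attempting to prove is labelled \emph{Conjecture} in the paper and appears in the ``Future Work'' section; the authors give no proof and explicitly pose it as an open problem motivated by asking whether the bound $\frac{(n-1)d(d+1)}{2}$ in Theorem~\ref{Thm: Cornering Strategy} is tight. So there is no proof in the paper to compare your proposal against, and what you have written is not a proof but a programme---one you yourself flag as incomplete in the final paragraph.

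Beyond that, there is a concrete arithmetic gap in the sketch. You take $k \approx n/(d+1)$ copies of a \v{C}ern\'y gadget on $d+1$ states and argue that synchronizing each copy costs $\Omega(d^2)$, with the copies coupled so that these costs accumulate. Even granting the coupling works perfectly, the total is $k\cdot\Omega(d^2)=\Omega\!\left(\tfrac{n}{d}\cdot d^2\right)=\Omega(nd)$, not $\Omega(nd^2)$. Your sentence ``summing over the $k\approx n/d$ gadgets and the $\Omega(d)$ rounds of \v{C}ern\'y-style synchronization in each produces the desired $\Omega(nd^2)$'' double-counts: the $\Omega(d)$ rounds are already what produce the $\Omega(d^2)$ cost per gadget, so you cannot multiply by an extra factor of $d$. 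To hit $\Omega(nd^2)$ with this architecture you would need either $\Omega(n)$ gadgets (forcing $d=O(1)$, which kills the quadratic factor) or $\Omega(d^3)$ cost per gadget (which no $(d{+}1)$-state synchronizable DFA can require, by the cubic upper bound). Your own closing remark---that the regime $d=\Theta(n)$ would yield an unconditional $\Omega(n^3)$ lower bound for synchronizing words, a famously open problem---is exactly the obstruction: the conjecture is genuinely hard, and a chain-of-\v{C}ern\'y construction is a factor of $d$ short.
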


Alternatively we can ask whether the assumptions of bidirectionality and of each vertex having a large out-neighbourhood in Theorem~\ref{Thm: R^d Cerny} are required.

\begin{conjecture}\label{Conj: Diff DFA satisfies Cerny}
    If $M = (Q,\Sigma,\delta)$ is an $n$-state difference DFA in $\mathbb{R}^d$ on $n$ vertices with a universally reachable state, then $M$ has a synchronizing word
    of length at most $(n-1)^2$.
\end{conjecture}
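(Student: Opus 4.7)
The plan is to combine the geometric corner construction of Theorem~\ref{Thm: R^d automatas} with a Pin--Frankl-style iterative image-shrinking argument, aiming to reduce the per-merge word length to $O(n)$ instead of the $O(d^2)$ given by Theorem~\ref{Thm: Cornering Strategy}. That theorem already yields $O(nd^2)$, which is $O(n^3)$ when $d = \Theta(n)$; the entire difficulty of Conjecture~\ref{Conj: Diff DFA satisfies Cerny} is to save a factor of $n$ by exploiting the rigid geometry of a difference DFA rather than relying on the neighbourhood lower bound used in Theorem~\ref{Thm: R^d Cerny}.

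First, I would reduce to the strongly connected case using standard DFA machinery: any synchronizable DFA has a unique sink strong component $C$, and the sub-DFA induced on $C$ with alphabet $\Psi_C = \{\mb v - \mb u : \mb u, \mb v \in C\} \setminus \{\mb 0\}$ is again a difference DFA in $\mathbb{R}^d$. A careful accounting should show that synchronizing $V$ into $C$ and then $C$ down to a point can be fit within the $(n-1)^2$ budget, so we may assume $G$ is strongly connected.

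For the strongly connected case, I would apply Lemma~\ref{Lemma: Points in R^d} to obtain a corner $\mb z \in V$ and a direction $\mb a$ with $\mb a \dotp \mb z > \mb a \dotp \mb x$ for all $\mb x \neq \mb z$, and define $f$ as in the proof of Theorem~\ref{Thm: R^d automatas}. The key observation is that every letter $\alpha \in \Psi$ with $\mb a \dotp \alpha > 0$ necessarily acts as a loop at $\mb z$: otherwise $\mb z + \alpha$ would lie in $V$ and beat $\mb z$ on $\mb a \dotp \cdot$. The iterative step, given a current image $S \supseteq \{\mb z\}$ with $|S| \geq 2$, is to select $\mb u \in S \setminus \{\mb z\}$ maximising $\mb a \dotp \mb u$, take a shortest directed path $P$ from $\mb u$ to $\mb z$ (of length at most $n - 1$ by strong connectivity), and apply $\psi(P)$. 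This drags $\mb u$ to $\mb z$, and provided $\mb z$ remains in the image, yields $|S(\psi(P))| < |S|$. Iterating at most $n-1$ times gives total length at most $(n-1)^2$.

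The main obstacle is guaranteeing that $\mb z$ stays in the image throughout. While positive-direction letters loop at $\mb z$, a shortest path $P$ from $\mb u$ to $\mb z$ may include letters with $\mb a \dotp \alpha \leq 0$ that move $\mb z$ off itself, so a single routing step might merely relabel $S$ rather than shrink it. Resolving this will likely require either choosing $P$ more cleverly---for instance, insisting $P$ be monotone in the $\mb a$-direction, which need not always exist---or proving a lemma that any shortest path to $\mb z$ can be extended by a short suffix that restores $\mb z$ to the image without destroying the merge. A promising alternative is induction on the number of distinct values of $\mb a \dotp \cdot$ attained on $V$: process the lower levels first via an inductive hypothesis applied to a projected lower-dimensional difference DFA on a face of $\mathrm{conv}(V)$, and then clear the top level using a short final word. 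Making either approach rigorous, while controlling the interaction between the global alphabet $\Psi$ and the corner structure and preserving the difference-DFA structure under projection, is the central technical challenge.
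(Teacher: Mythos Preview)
This statement is listed in the paper as an open \emph{conjecture} in the Future Work section; the paper does not prove it and offers no proof sketch. So there is nothing to compare your proposal against: you are attempting to resolve a problem the authors explicitly leave open, and indeed one that would settle \v{C}ern\'y's conjecture for the entire class of difference DFAs.

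Your outline is a sensible line of attack, and you are honest about where it breaks. The observation that any letter $\alpha$ with $\mb a \dotp \alpha > 0$ must loop at $\mb z$ is correct and useful. But the gap you flag---that a shortest path from $\mb u$ to $\mb z$ may use letters with $\mb a \dotp \alpha \leq 0$, which can move $\mb z$ off itself---is exactly the crux, and neither of your proposed repairs is close to complete. Insisting on an $\mb a$-monotone path need not be possible (easy to build examples where every $\mb u$-to-$\mb z$ path dips), and your ``short suffix restoring $\mb z$'' idea just reintroduces the same problem one level down, with no obvious bound on the cumulative cost. The inductive projection idea is vague: projecting onto a face of $\mathrm{conv}(V)$ does not obviously yield a difference DFA on that face, since the transition function for a letter $\alpha$ depends on whether $\mb x + \alpha \in V$, not on whether it lies in the face.

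Two smaller issues. First, your reduction to the sink component is not as routine as you suggest: restricting the alphabet to $\{\mb v - \mb u : \mb u, \mb v \in C\}$ can disconnect $C$, since strong connectivity of $C$ in $G$ may rely on letters coming from $V \setminus C$; and the ``careful accounting'' that pushes the cost of entering $C$ plus synchronizing within $C$ under $(n-1)^2$ is itself a known delicate point in the \v{C}ern\'y literature. Second, the conjecture as stated does not assume synchronizability, so you should at minimum note that you are proving the conditional form (if synchronizable, then length $\leq (n-1)^2$), or else separately argue synchronizability---which the paper only establishes in the bidirectional connected case (Theorem~\ref{Thm: R^d automatas}) and leaves open for merely connected difference DFAs (Conjecture~\ref{Conj: Diff DFA}).
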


To conclude,
we ask questions regarding the product of two or more DFAs. Out of all of the questions we give here,
this question 
seems most applicable, especially as one of our primary motivations for considering the product is simple machines, which in modern contexts are often deployed at scale.

\begin{conjecture}\label{Conj: Disjoint Corner DFAs }
    Let $k \geq 3$, let $\Sigma$ be an alphabet, and let $f:\Sigma^* \rightarrow \{-1,1\}$ satisfy $f(\epsilon) = 1$. 
    If $M_1,\dots, M_k$ are 
    distinct DFAs over $\Sigma$
    and for all $i \in [k]$ there exists an $f$-corner $z_i$ in $M_i$, then $M = M_1 \times \dots \times M_k$ is $(z_1,\dots, z_k)$-synchronizable. 
\end{conjecture}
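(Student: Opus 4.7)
The plan is to proceed by induction on $k \geq 2$, with the base case $k = 2$ given by Theorem~\ref{Thm: Two Corner Disjoint DFAs}. For the inductive step, I would assume the conjecture holds for $k - 1$ DFAs and let $G_1, \ldots, G_k$ be DFAs over the common alphabet $\Psi$ with common $f$-corners $x_1, \ldots, x_k$. My goal is to construct an $(x_1, \ldots, x_k)$-synchronizing word for $G = G_1 + \cdots + G_k$ by mimicking the two-phase structure from the proof of Theorem~\ref{Thm: Two Corner Disjoint DFAs}.

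In the first phase, for each $i \in [k]$ let $\tau_i$ be an $x_i$-synchronizing word for $G_i$, which exists by Theorem~\ref{Thm: Cornering Strategy}. Applying $\rho = \tau_1 \tau_2 \cdots \tau_k$ to $G$ drives every initial configuration into the single determined state $(y_1, \ldots, y_{k-1}, x_k)$, where $y_i = x_i(\tau_{i+1} \cdots \tau_k)$. In the second phase, I would construct a directed walk from $(y_1, \ldots, y_{k-1}, x_k)$ to $(x_1, \ldots, x_k)$ by iterating the following loop: first apply the inductive hypothesis to $G_1 + \cdots + G_{k-1}$ to obtain a word $\alpha$ that returns those coordinates to $(x_1, \ldots, x_{k-1})$, then apply a shortest $f$-positive walk $\beta$ in $G_k$ from the current $G_k$-state back to $x_k$. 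After such a pair of moves, the corner property of $x_i$ for each $i < k$ gives $\dist^+_{G_i, f}(x_i(\beta), x_i) < |\beta|$, so the first $k - 1$ coordinates sit within $|\beta| - 1$ of their corners.

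The main obstacle will be proving that this iterative procedure terminates. In the $k = 2$ case the alternation succeeds because at the start of every iteration exactly one coordinate is pinned at its corner, and the single $f$-positive walk perturbing it is bounded directly by the corner property, producing a strictly decreasing potential. For $k \geq 3$, the inductive word $\alpha$ is a concatenation of several $f$-positive walks, and since $f$ need not be a homomorphism on $\Psi^*$, such a concatenation is generally not itself $f$-positive. Consequently the corner property of $x_k$ does not immediately bound how far $\alpha$ displaces $G_k$ from $x_k$, and the naive potential $\dist^+_{G_k, f}(z_k, x_k)$ evaluated after each $\alpha$ need not decrease.

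To overcome this I would attempt one of two refinements. The first is to strengthen the inductive hypothesis so that the word $\alpha$ it produces has length controlled by an aggregate measure of the starting coordinates' $f$-positive distances to their corners; this would bound $|\beta|$ in each subsequent iteration and yield a monotone potential. The second is to construct directly a function $g : \Psi^* \to \{-1, 1\}$ under which $(x_1, \ldots, x_k)$ is a $g$-corner in $G$ and then apply Theorem~\ref{Thm: Cornering is necessary} to conclude $(x_1, \ldots, x_k)$-synchronizability. Either approach will require genuinely new ideas beyond the ``one coordinate pinned at its corner'' alternation used in Theorem~\ref{Thm: Two Corner Disjoint DFAs}, since that invariant does not survive past two DFAs.
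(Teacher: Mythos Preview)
The statement is listed in the paper as an open conjecture; there is no proof in the paper to compare against. What you have written is an outline of an attack on an open problem, and you yourself flag that it does not close.

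Your diagnosis of the obstacle is correct and is the heart of the matter. In Theorem~\ref{Thm: Two Corner Disjoint DFAs} each word $\sigma_i$ passed in the second phase is a single $f$-positive word, so the corner inequality $\dist^+_f(y_i,v) < \dist^+_f(v,y_i) \leq |\sigma_i|$ applies and drives the potential strictly down. For $k\geq 3$ the inductive word $\alpha$ is a concatenation of $f$-positive words, and since $f$ is an arbitrary map on $\Psi^*$, such a concatenation need not be $f$-positive. Worse, $\dist^+_f$ does not satisfy a triangle inequality (concatenating two $f$-positive walks need not yield an $f$-positive walk), so decomposing $\alpha$ into its $f$-positive pieces and tracking them one at a time still fails to bound $\dist^+_{G_k,f}(x_k(\alpha),x_k)$. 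This is a genuine gap, not a technicality.

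Of your two suggested refinements, the second is a restatement of the goal: by Theorem~\ref{Thm: Cornering is necessary}, exhibiting a $g$ for which $(x_1,\dots,x_k)$ is a $g$-corner in $G$ is equivalent to proving that $G$ is $(x_1,\dots,x_k)$-synchronizable. The first refinement is not circular but is insufficient as stated: a bound on $|\alpha|$ alone does not bound $\dist^+_{G_k,f}(x_k(\alpha),x_k)$, because a short word that is not $f$-positive can carry $x_k$ to a state whose $f$-positive return distance to $x_k$ is arbitrarily large. Any strengthened induction would have to control the $f$-positive displacement in $G_k$ directly, and nothing in the $k=2$ argument indicates how to do that. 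In short, your proposal accurately maps the difficulty the paper leaves open, but does not resolve it.
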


Another natural question to ask is the following.

\begin{question}\label{Conj: Everywhere sychronizable}
    For which DFAs $M_1$ and $M_2$ over the same alphabet is $M_1 \times M_2$ $(z_1,z_2)$-synchronizable for all states $z_1$ in $M_1$ and $z_2$ in $M_2$?
\end{question}

\section{Acknowledgment}
We are grateful to the anonymous reviewer for their helpful comments, which increased the quality of this paper.

\bibliographystyle{plain}
\bibliography{DFA}

\end{document}